\keywords{model checking,
quantitative reasoning,
formal logic,
quantum computing}
\theoremstyle{plain} 
\newcommand{\QC}{\mathfrak{Q}}
\newcommand{\h}{\mathcal{H}}
\renewcommand{\L}{\mathcal{L}}
\newcommand{\D}{\mathcal{D}}
\newcommand{\I}{\mathcal{I}}
\newcommand{\J}{\mathcal{J}}
\newcommand{\M}{\mathds{M}}
\renewcommand{\P}{\mathds{P}}
\newcommand{\EE}{\mathbf{E}}
\newcommand{\HH}{\mathbf{H}}
\newcommand{\LL}{\mathbf{L}}
\newcommand{\PP}{\mathbf{P}}
\newcommand{\T}{\mathrm{T}}
\newcommand{\id}{\mathbf{I}}
\newcommand{\vl}{\mathrm{V2L}}
\newcommand{\lv}{\mathrm{L2V}}
\newcommand{\cq}{\mathrm{cq}}
\newcommand{\e}{\mathrm{e}}
\newcommand{\tr}{\mathrm{tr}}
\newcommand{\spn}{\mathrm{span}}
\newcommand{\BigO}{\mathcal{O}}
\newcommand{\SmlO}{\mathit{o}}
\newcommand{\ntl}{\mathrm{U}\,}
\begin{document}

\title[Checking CSL against Quantum CTMCs]{Checking Continuous Stochastic Logic
against Quantum Continuous-Time Markov Chains}

\author[Xu]{Ming Xu\lmcsorcid{0000-0002-9906-5677}}[a]	
\address{Shanghai Key Lab of Trustworthy Computing,
East China Normal University, China \\
\& Shanghai Qi Zhi Institute, China}	
\email{mxu@cs.ecnu.edu.cn}  
\thanks{XU is supported by National Natural Science Foundation of China (Grant Nos. 12271172 \& 11871221),
and the ``Digital Silk Road'' Shanghai International Joint Lab of Trustworthy Intelligent Software
(Grant No. 22510750100).}	

\author[Mei]{Jingyi Mei\lmcsorcid{0000-0002-4665-9818}}[b]	
\address{Shanghai Key Lab of Trustworthy Computing, East China Normal University, China \\
\& Leiden Institute of Advanced Computer Science, Leiden University, The Netherlands}	
\email{mjyecnu@163.com}  

\author[Guan]{Ji Guan\lmcsorcid{0000-0002-3490-0029}}[c]	
\address{Key Laboratory of System Software (Chinese Academy of Sciences)
\& State Key Lab of Computer Science, Institute of Software, Chinese Academy of Sciences, China}	
\email{guanj@ios.ac.cn}
\thanks{GUAN is supported by the Innovation Program for Quantum Science and Technology (Grant No. 2024ZD0300500),
Youth Innovation Promotion Association, Chinese Academy of Sciences (Grant No. 2023116),
the National Natural Science Foundation of China (Grant No. 62402485),
and the Young Elite Scientists Sponsorship Program, China Association for Science and Technology.}	

\author[Deng]{Yuxin Deng\lmcsorcid{0000-0003-0753-418X}}[d]	
\address{MoE Key Laboratory of Interdisciplinary Research of Computation and Economics, Shanghai University of Finance and Economics, China}	
\email{yxdeng@msg.sufe.edu.cn}  
\thanks{DENG is supported by National Natural Science Foundation of China (Grant No. 62472175),
and the Shanghai Trusted Industry Internet Software Collaborative Innovation Center.}

\author[Yu]{Nengkun Yu\lmcsorcid{0000-0003-1188-3032}}[e]	
\address{Department of Computer Science, Stony Brook University, United States}	
\email{nengkunyu@gmail.com}





\begin{abstract}
  \noindent Verifying quantum systems has attracted a lot of interest in the last decades.
  In this paper,
  we study the quantitative model-checking of quantum continuous-time Markov chains (quantum CTMCs).
  The branching-time properties of quantum CTMCs are specified
  by continuous stochastic logic (CSL),
  which is well-known for verifying real-time systems, including classical CTMCs.
  The core of checking the CSL formulas lies in tackling multiphase until formulas.
  We develop an algebraic method using proper projection, matrix exponentiation, and definite integration
  to symbolically calculate the probability measures of path formulas.
  Thus the decidability of CSL is established.
  To be efficient, numerical methods are incorporated to
  guarantee that the time complexity is polynomial in the encoding size of the input model
  and linear in the size of the input formula.
  A running example of Apollonian networks is further provided to demonstrate our method.  
\end{abstract}

\maketitle

\section{Introduction}\label{S:one}
Rapid development has taken place on quantum hardware in recent years~\cite{ZWD+20,CDG21}.
In the meantime,
quantum software will be crucial in implementing quantum algorithms
and harnessing the power of quantum computers,
e.g. Shor's algorithm with an exponential speed-up for integer factorization~\cite{Sho94}
and Grover's algorithm with a quadratic speed-up for unstructured search~\cite{Gro96}.
The first practical quantum programming language QCL appeared in \"{O}mer's work~\cite{Omer98}.
The quantum guarded command language (qGCL)
was presented to program a ``universal'' quantum computer~\cite{SZ99}.
Selinger~\cite{Sel04} proposed functional programming languages QFC with high-level features.
To ensure reliability,
all kinds of verification techniques~\cite{DHP06,YiF21} need
to be developed for various quantum algorithms and protocols.

Model-checking~\cite{CGP99,BaK08} is one of the most successful technologies
for verifying classical hardware and software systems.
There are also many explorations on applying model-checking to discrete-time quantum systems~\cite{YiF21},
such as quantum programs~\cite{XFJ+24} and quantum (discrete-time) Markov chains~\cite{FYY13}.
However, it is also worth studying continuous-time models in quantum computing.
In this paper,
we consider a novel model of quantum continuous-time Markov chain (quantum CTMC).
It is a composite system consisting of a classical subsystem and a quantum subsystem.
In a classical CTMC,
the instantaneous descriptions (IDs) are given by probability distributions over
a finite set $S$ of classical states.
Whereas, in a quantum CTMC,
the IDs are given by positive semi-definite matrices,
which bear quantum information,
distributed over those classical states $S$.
The ID's evolution of a particular class of quantum CTMCs
that have only one classical state is a \emph{closed} system
in which there are no other classical states (its environment) to interact with.
In comparison, the ID's evolution of a quantum CTMC with more than one classical state is an \emph{open} system.
Let us imagine a swinging pendulum in a mechanical clock.
In an ideal situation, the pendulum does not suffer any unwanted interaction,
such as air friction, from its environment, which is a closed system.
In practice, however, air friction indeed exists, which would eventually make the pendulum cease.
As interactions happen between the pendulum and its environment,
it is an open system.
It is evident that open systems are more common in the real world,
which motivates us to study the open system --- quantum CTMC.
The dynamical system of quantum CTMC can be characterized by the Lindblad master equation:
\begin{equation}\label{Lind}
	\frac{d \rho(t)}{d t}=\L(\rho(t))
\end{equation}
where $\rho(t)$ is the ID of the system at time $t$,
and $\L$ is a linear function of $\rho(t)$ (to be described in Subsection~\ref{S22}).

Under the model of quantum CTMC,
we can develop the notion of cylinder set that is a well-formed set of paths
with a computable probability measure,
which is obtained by
proper \textbf{projection} on the ID $\rho$ in Eq.~\eqref{Lind} for ruling out dissatisfying paths
and \textbf{matrix exponentiation} of the linear function $\L$
for computing the probability distribution as time goes by.
Thereby we establish the probability space to do the formal quantitative reasoning.
As a kind of real-time system,
the branching-time properties are popularly specified by
continuous stochastic logic (CSL).
The core of checking the CSL formulas lies in tackling the multiphase until formula.
It requires the temporal property that
the whole time line should be split into multiple phases,
and in each phase the state of the system should meet the corresponding state constraint,
where the time-stamps switching phases are called switch time-stamps.
More specifically,
the switch time-stamps are in the absolute timing that are measured from the start.
It guides us to design a nested \textbf{definite integral}
concerning the time variables interpreted as the switch time-stamps
to collect the probability measure of the candidate paths phase by phase.
By the mathematical tools of projection,
matrix exponentiation and definite integration,
we can decide the CSL formula
with time complexity being polynomial in the encoding size of the input quantum CTMC
and linear in the size of the input CSL formula,
besides the complexity of the query --- how sufficiently the Euler constant $\e$ should be approached.
However, if measuring the model size by the number of qubits involved,
the decision procedure is still exponential time in that number.
For the consideration of practical efficiency,
we also incorporate numerical methods
--- the scaling and squaring method for matrix exponentiation and the Riemann sum for definite integration ---
to yield a polynomial-time procedure for the satisfaction probability
in the encoding size of the input quantum CTMC.
Furthermore, we provide a running example of Apollonian networks
to demonstrate our method,
and show the usefulness of our method
via checking the percolation performance of those networks.

\subsection{Related Work}
\paragraph{Checking on CTMC}
The IDs for a continuous-time Markov chain (CTMC) are represented by probability distributions.
Aziz \textit{et~al.}~\cite{ASS+96} initiated the model-checking on CTMC in 1996.
They introduced the continuous stochastic logic (CSL) to specify temporal properties of the CTMC.
Roughly speaking, the syntax of CSL amounts to
that of computation tree logic (CTL) plus the multiphase until formula
$\Phi_0 \ntl^{\I_0} \Phi_1 \ntl^{\I_1} \Phi_2 \cdots \ntl^{\I_{K-1}} \Phi_K$
and the probability quantifier formula $\Pr_{>\texttt{c}}(\,\cdot\,)$ defined on those IDs.
They proved the decidability of CSL by number-theoretic analysis.
An approximate model-checking algorithm for a reduced version of CSL
was provided by Baier \textit{et~al.}~\cite{BKH99},
in which multiphase until formulas are restricted to binary until formulas $\Phi_1 \ntl^\I \Phi_2$.
Under this logic, they applied the efficient numerical technique
--- \emph{uniformisation} ---
for \emph{transient analysis}~\cite{BHH+00}.
The approximate algorithms have been extended for multiphase until formulas
using \emph{stratification}~\cite{ZJN+11,ZJN+12}.
After that,
Xu~\textit{et~al.} considered the multiphase until formulas over the CTMC
equipped with a cost/reward structure~\cite{XZJ+16}.
An algebraic algorithm was proposed to attack this problem,
whose effectiveness is ensured by number-theoretic results and algebraic manipulation.
Most of the above algorithms have been implemented in probabilistic model checkers,
like \textsl{PRISM}~\cite{KNP11},
\textsl{Storm}~\cite{DJK+17} and \textsl{EPMC}~\cite{FHL+22}.
However, the IDs of the quantum CTMC
are represented by positive semi-definite matrices,
which generalize the diagonal matrices encoded for probability distributions.
It leads to the fact that quantum CTMCs generalize classical CTMCs,
so new formalism must be developed for this situation,
as done in the current paper.

\paragraph{Checking on quantum DTMC}
A quantum discrete-time Markov chain (quantum DTMC)
is a composite model on classical state space (a finite set)
and quantum state space (a continuum),
on which the evolution is discrete-time by quantum operations~\cite{LiF15}.
Gay \textit{et~al.}~\cite{GNP08}
restricted the quantum operations to Clifford group gates
(i.e., Hadamard, Pauli, CNOT and the phase gates)
and the whole state space as a finite set of describable states, named stabilizer states,
that are closed under those Clifford group gates.
They applied the model checker \textsc{PRISM} to verify the protocols of quantum superdense coding,
quantum teleportation,
and quantum error correction.
Whereas,
Feng \textit{et~al.} proposed the Markov chain with transitions given by the general quantum operations~\cite{FYY13}.
Under the model, the authors considered the reachability probability~\cite{YFY+13},
the repeated reachability probability~\cite{FHT+17},
and the model-checking of a quantum analogy of CTL~\cite{FYY13}.
A key step in their work is
decomposing the whole state space (known as a Hilbert space)
into a direct-sum of some bottom strongly connected component (BSCC) subspaces
plus a maximal transient subspace
concerning a given quantum operation describing the evolution of the quantum DTMC~\cite{YFY+13,GFY18}.
After the decomposition,
all the above problems were shown
to be computable/decidable in polynomial time concerning the dimension of the Hilbert space.
In contrast to the quantum operations in discrete-time evolution,
when considering quantum CTMC,
the general linear operators should be admitted
to describe the rates of state transitions in continuous time.
For this purpose,
we will characterize the continuous-time evolution by the Lindblad master equation in the present work.

\paragraph{Linear-time logic vs branching-time logic}
All of the above literature are concerned with branching-time logic
that classifies all paths into satisfying ones and dissatisfying ones,
and compares the probability measure of satisfying paths with a predefined threshold.
Fewer works concerning linear-time logics
impose constraints on the probability measure of all paths at some critical time periods.
Guan and Yu introduced continuous linear logic (CLL) over CTMCs~\cite{GuY22},
and proved the decidability by real root isolation of exponential polynomials.
Xu \textit{et~al.} initiated the model-checking on quantum CTMCs,
in which the decidability of signal temporal logic (STL) was established, again,
by real root isolation~\cite{XMG+21}.
A novel sample-driven procedure for solving the repeated reachability problem was developed in~\cite{JFX+24}.
Both CLL and STL are linear-time logics.
However, checking the more popular branching-time logic over quantum CTMCs,
as far as we know,
remains open.

\paragraph{Contribution}
The contributions of the present paper can be summarised as:
\begin{enumerate}
	\item We consider the novel model of quantum CTMC that has a more precise description of a composite system
	--- consisting of a classical subsystem and a quantum subsystem --- than the existing work~\cite{XMG+21}.
	\item We prove the decidability of CSL over quantum CTMCs
	and give a numerical approach in time polynomial in the dimension of the state space
	to compute the probability measure.
	\item A running example of Apollonian networks is provided
	to show the utility of our method.
\end{enumerate}

\paragraph{Organization}
The rest of this paper is organized as follows.
Section~\ref{S2} recalls some notions and notations from quantum computing and number theory.
In Section~\ref{S3} we introduce the model of quantum CTMC,
and establish the probability space.
Based on that,
we will check the CSL formula for specifying branching-time properties over quantum CTMCs in Section~\ref{S4}.
We further prove the decidability of CSL and give an efficient numerical approach.
Section~\ref{S5} is the conclusion.
The implementation is delivered in Appendix~\ref{A1}.

\section{Preliminaries}\label{S2}
\subsection{Quantum Computing}
Here we recall some basic notions and notations from quantum computing~\cite{NiC00},
which will be widely used in this paper.
Let $\h$ be a Hilbert space with finite dimension $d$, that is,
a vector space over complex numbers $\mathbb{C}$ equipped with an inner product.
We employ the Dirac notations as follows:
\begin{itemize}[label=$\triangleright$]
	\item $\ket{\psi}$ is a unit column vector labelled with $\psi$;
	\item $\bra{\psi}\coloneqq \ket{\psi}^\dag$ is the Hermitian adjoint
	(transpose and entry-wise complex conjugate) of $\ket{\psi}$;
	\item $\ip{\psi_1}{\psi_2}\coloneqq\bra{\psi_1}\ket{\psi_2}$
	is the inner product of $\ket{\psi_1}$ and $\ket{\psi_2}$;
	\item $\op{\psi_1}{\psi_2}\coloneqq\ket{\psi_1} \otimes \bra{\psi_2}$ is the outer product
	where $\otimes$ denotes tensor product;
	\item $\ket{\psi_1,\psi_2}\coloneqq\ket{\psi_1}\ket{\psi_2}$ is a shorthand of
	the tensor product $\ket{\psi_1}\otimes\ket{\psi_2}$.
\end{itemize}
Fixing an orthonormal basis $\{\ket{\psi_i}\colon 1 \le i \le d\}$ of $\h$,
$\ket{\psi} \in \h$ can be linearly expressed as $\sum_{i=1}^d c_i\ket{\psi_i}$
with $c_i \in \mathbb{C}$ and $\sum_{i=1}^d |c_i|^2=1$.
This $\ket{\psi}$ is called a \emph{superposition}
if it has two or more nonzero terms $c_i\ket{\psi_i}$ under that basis.
For example,
the familiar elements $\ket{\pm}\coloneqq(\ket{0}\pm\ket{1})/\sqrt{2}$ are superpositions
over the basis $\{\ket{0},\ket{1}\}$.

\paragraph{Linear operator}
We will mainly consider the linear operators on $\h$.
Here such a parameter $\h$ can be omitted when it is clear from the context.
A linear operator $\gamma$ is \emph{Hermitian} if $\gamma=\gamma^\dag$;
it is \emph{positive}
if $\bra{\psi}\gamma\ket{\psi} \ge 0$ holds for all $\ket{\psi}\in\h$.
A \emph{projector} $\PP$
is a positive operator of the form $\sum_{i=1}^k \op{\psi_i}{\psi_i}$ ($k\le d$)
for some orthonormal elements $\ket{\psi_i} \in \h$ ($1 \le i \le k$).
Clearly, there is a bijective map
between projectors $\PP=\sum_{i=1}^k \op{\psi_i}{\psi_i}$
and subspaces of $\h$ that are spanned by $\{\ket{\psi_i}\colon 1 \le i \le k\}$.
In summary, positive operators are Hermitian operators
whose eigenvalues are all nonnegative;
projectors are positive operators, all of whose nonzero eigenvalues are $1$.
The \emph{identity operator} $\id$ is the linear operator
$\sum_{i=1}^d\op{\psi_i}{\psi_i}$ for some orthonormal basis $\{\ket{\psi_i}\colon 1 \le i \le d\}$ of $\h$.
A linear operator $\mathbf{U}$ is \emph{unitary}
if $\mathbf{U}\mathbf{U}^\dag=\mathbf{U}^\dag\mathbf{U}=\id$.
The \emph{trace} of a linear operator $\gamma$ is defined as
$\tr(\gamma)\coloneqq\sum_{i=1}^d \bra{\psi_i}\gamma\ket{\psi_i}$
for some orthonormal basis $\{\ket{\psi_i}\colon 1 \le i \le d\}$ of $\h$.
It is worth noting that
the trace function is actually independent of the orthonormal basis selected.
For a linear operator $\gamma_{A,B}$ on the composite Hilbert space $\h_A\otimes\h_B$, 
we can define the \emph{partial traces} as:
\begin{equation}\label{eq:trace}
\begin{aligned}
\tr_A(\gamma_{A,B})
& \coloneqq \sum_{i=1}^{d_A} (\bra{\psi_i^A}\otimes\id_B)\gamma_{A,B}(\ket{\psi_i^A}\otimes\id_B) \\
\tr_B(\gamma_{A,B})
& \coloneqq \sum_{i=1}^{d_B} (\id_A\otimes\bra{\psi_i^B})\gamma_{A,B}(\id_A\otimes\ket{\psi_i^B}),
\end{aligned}
\end{equation}
where $\id_A$ (resp.~$\id_B$) is the identity operator on $\h_A$ (resp.~$\h_B$),
$d_A$ (resp.~$d_B$) is the dimension of $\h_A$ (resp.~$\h_B$),
and $\{\ket{\psi_i^A}\colon 1 \le i \le d_A\}$ (resp.~$\{\ket{\psi_i^B}\colon 1 \le i \le d_B\}$)
is some orthonormal basis of $\h_A$ (resp.~$\h_B$).

\paragraph{Quantum state}
The quantum states are given by
the form of probabilistic ensembles $\{(p_i,\ket{\psi_i})\colon 1\le i \le k\}$
with $p_i>0$ and $\sum_{i=1}^k p_i=1$.
An alternative representation is using positive operators $\rho=\sum_{i=1}^k p_i \op{\psi_i}{\psi_i}$,
named \emph{density operators},
whose traces are unit as $\tr(\rho)=\sum_{i=1}^k p_i \tr(\op{\psi_i}{\psi_i})=1$.
Here those $\ket{\psi_i}$ are not necessarily orthonormal.
To be more explicit,
we resort to the spectral decomposition~\cite[Box~2.2]{NiC00}
that $\rho=\sum_{i=1}^d \lambda_i \op{\lambda_i}{\lambda_i}$,
where $\ket{\lambda_i}$ are eigenvectors interpreted as the \emph{eigenstates} of $\rho$
and $\lambda_i$ are eigenvalues
interpreted as the \emph{probabilities} of taking the eigenstates $\ket{\lambda_i}$.
It is worth noting that the spectral decomposition of $\rho$ is not unique,
but the number of nonzero eigenvalues counted with multiplicities is unique
since it is exactly the rank of $\rho$.
When there is only one eigenstate with positive probability,
$\rho$ is said to be a \emph{pure} state;
otherwise it is a \emph{mixed} state.
In other words, a pure state $\op{\psi}{\psi}$ (or simply $\ket{\psi}$)
indicates the system state which we completely know;
a mixed state $\sum_{i=1}^d \lambda_i \op{\lambda_i}{\lambda_i}$
gives all possible system states $\ket{\lambda_i}$ with a total probability of $\sum_{i=1}^d \lambda_i=1$.
We denote by $\D$ the set of density operators,
and by $\D^{\le 1}$ the set of partial density operators
that are positive operators with trace bounded by $1$.
Sometimes, we need to extract local information from a composite system.
It can be achieved by the partial traces mentioned in Eq.~\eqref{eq:trace},
which result in the \emph{reduced} density operators
$\rho_A=\tr_B(\rho_{A,B})$ on $\h_A$ and $\rho_B=\tr_A(\rho_{A,B})$ on $\h_B$, respectively.
For example,
considering the Bell state $\rho_{A,B} = \tfrac{1}{2}(\ket{0,0}+\ket{1,1})(\bra{0,0}+\bra{1,1})$, 
we can get the reduced density operators
$\rho_A=\tr_B(\rho_{A,B})=\tfrac{1}{2}(\op{0}{0}+\op{1}{1})$
and $\rho_B=\tr_A(\rho_{A,B})=\tfrac{1}{2}(\op{0}{0}+\op{1}{1})$.
In other words,
the mixed states $\rho_A$ on $\h_A$ and $\rho_B$ on $\h_B$ are traced
from the pure state $\rho_{A,B}$ on $\h_A \otimes \h_B$.

\paragraph{Quantum operation}
The quantum operations (also known as super-operators) are \emph{completely positive} operators that
map from the set $\D$ of density operators to itself.
As a special case, all unitary transformations $\{\mathbf{U}\}$ are quantum operations
that act on pure states as $\ket{\psi} \mapsto \mathbf{U}\ket{\psi}$
and on mixed states as $\rho \mapsto \mathbf{U}\rho\mathbf{U}^\dag$.
Generally speaking, every quantum operation can be written
in Kraus representation $\{\EE_j:1 \le j \le m\}$ for some $m \in \mathbb{N}$,
where $\EE_j$ are Kraus operators satisfying $\sum_{j=1}^m \EE_j^\dag\EE_j = \id$,
so that it acts on quantum states as $\rho \mapsto \sum_{j=1}^m \EE_j \rho \EE_j^\dag$.
Here, the condition $\sum_{j=1}^m \EE_j^\dag\EE_j = \id$ is the so-called \emph{trace-preservation},
due to
\begin{equation}
	\tr\left(\sum_{j=1}^m \EE_j \rho \EE_j^\dag\right)
	=\sum_{j=1}^m \tr(\EE_j \rho \EE_j^\dag)
	=\sum_{j=1}^m \tr(\EE_j^\dag\EE_j \rho)
	=\tr\left(\sum_{j=1}^m \EE_j^\dag\EE_j \rho \right)
	=\tr(\rho).
\end{equation}
It is worth noting that the Kraus representation of a quantum operation is not unique,
but the number $m$ of Kraus operators can be bounded by $d^2$,
since quantum operations are linear operators on $\D$
and $\D$ is contained in the set of Hermitian operators which is a linear space of dimension $d^2$.

\paragraph{State evolution}
A \emph{closed} system is
an ideal system that does not suffer from any unwanted interaction from outside environment.
State evolution in a closed system is induced by
some unitary transformation $\ket{\psi(t+\Delta t)}=\mathbf{U}(\Delta t)\ket{\psi(t)}$,
where $\ket{\psi(t)}$ is the state of the system at time $t$.
From the identity $\mathbf{U}(\Delta t)=\exp(-\imath \HH \Delta t)$,
we can determine the \emph{Hamiltonian} (a Hermitian operator) $\HH$
in the Schr\"odinger equation characterizing that state evolution:
\begin{equation}\label{eq:schrodinger}
	\frac{d \ket{\psi(t)}}{d t} = -\imath \HH\ket{\psi(t)}.
\end{equation}
An \emph{open} system interacts with its environment.
Composed with the environment, the large system is closed.
In more detail,
for a mixed state $\rho=\sum_{i=1}^k p_i \op{\psi_i}{\psi_i}$ in an open system,
there is a pure state in the large system
\[
	\hat{\rho}=\left(\sum_{i=1}^k \sqrt{p_i}\ket{\psi_i,\textup{env}_i}\right)
	\left(\sum_{i=1}^k \sqrt{p_i}\bra{\psi_i,\textup{env}_i}\right)
\]
for some orthonormal states $\ket{\textup{env}_i}$ of the environment.
After the purification,
state evolution in the large system is induced by the unitary transformation
$\hat{\rho}(t+\Delta t) = \hat{\mathbf{U}}(\Delta t) \hat{\rho}(t) \hat{\mathbf{U}}^\dag(\Delta t)$ as above.
By tracing out the environment,
we get the quantum operation in the original system:
\[
	\rho(t+\Delta t)
	=\sum_{i=1}^k (\id\otimes\bra{\textup{env}_i}) \hat{\rho}(t+\Delta t) (\id\otimes\ket{\textup{env}_i})
	=\sum_j \EE_j(\Delta t) \rho(t) \EE_j^\dag(\Delta t),
\]
where the last equation comes from Kraus representation of the current quantum operation.
Based on those Kraus operators $\EE_j$ that are additionally assumed to be Markovian,
we will obtain the Lindblad master equation
characterizing the state evolution of the open system
in the coming subsection.

\subsection{Lindblad Master Equation}\label{S22}
The quantum continuous-time Markov chain considered in this paper is
a kind of open systems induced by the quantum operation
\begin{equation}\label{eq1}
	\rho(t+\Delta t)=\sum_{j=0}^m \EE_j \rho(t) \EE_j^\dag.
\end{equation}
The Lindblad master equation~\cite{Lin76,GKS76} can be obtained
by knowing the structure of those operators $\EE_j$.
The dynamical system like other Markov models has the \emph{memoryless} property,
i.e., its state evolution is determined only by the current state.
It requires that $\EE_j$ should depend only on the infinitesimal time $\Delta t$,
not on the time $t$.
So it suffices to expand $\EE_j$ into power series,
particularly determining the coefficients of $(\Delta t)^0=1$, $(\Delta t)^{1/2}=\sqrt{\Delta t}$,
$(\Delta t)^1=\Delta t$, and so on if necessary.

First, by $\lim_{\Delta t \to 0}\rho(t+\Delta t)=\rho(t)$,
we can assume without loss of generality that
i) there exists an operator, say $\EE_0(\Delta t)$,
that amounts to $\id \cdot 1 + \SmlO(1)$,
where the infinitesimal $\SmlO((\Delta t)^k)$ means
$\lim_{\Delta t \to 0}\SmlO((\Delta t)^k)/(\Delta t)^k=0$ for $k\ge 0$,
and ii) other operators $\EE_j(\Delta t)$ ($j>0$) amount to $\SmlO(1)$.

Next, supposing $\EE_j(\Delta t)=\LL_j \sqrt{\Delta t}+ \SmlO(\sqrt{\Delta t})$ ($j>0$),
we get
\begin{equation}\label{eq2}
	\sum_{j=1}^m \EE_j(\Delta t) \rho(t) \EE_j^\dag(\Delta t)
	= \sum_{j=1}^m \LL_j \rho(t) \LL_j^\dag \Delta t + \SmlO(\Delta t).
\end{equation}
But the quantum operation $\{\LL_j: 1 \le j \le m\}$ is not trace-preserving.
To resolve it, $\EE_0(\Delta t)$ is further supposed to be
$\id -\imath\HH\Delta t -\tfrac{1}{2}\sum_{j=1}^m\LL_j^\dag\LL_j\Delta t +\SmlO(\Delta t)$
where $\HH$ is Hermitian,
so that
\begin{equation}\label{eq3}
	\EE_0(\Delta t) \rho(t) \EE_0^\dag(\Delta t)
	= \rho(t) +\left[-\imath\HH\rho(t) + \imath\rho(t)\HH
	-\tfrac{1}{2}\sum_{j=1}^m (\LL_j^\dag\LL_j\rho(t) + \rho(t)\LL_j^\dag\LL_j)\right] \Delta t + \SmlO(\Delta t),
\end{equation}
while the quantum operation
$\{\id -\imath\HH\Delta t -\tfrac{1}{2}\sum_{j=1}^m\LL_j^\dag\LL_j\Delta t\} \cup \{\LL_j: 1 \le j \le m\}$
is trace-preserving
and therefore $\sum_{j=0}^m \EE_j^\dag(\Delta t)\EE_j(\Delta t)=\id + \SmlO(\Delta t)$.

Finally, putting Eqs.~\eqref{eq1}, \eqref{eq2} and~\eqref{eq3} into
\[
	\rho'=\lim_{\Delta t \to 0} \dfrac{\rho(t+\Delta t)-\rho(t)}{\Delta t},
\]
we obtain the Lindblad master equation
for characterizing the quantum continuous-time Markov chain.
It is of the following form
\begin{equation}\label{eq:Lindblad}
\begin{aligned}
	\rho' & =-\imath\HH\rho+\imath\rho\HH
	+\sum_{j=1}^m \left(\LL_j\rho \LL_j^\dag
	-\tfrac{1}{2}\LL_j^\dag\LL_j\rho-\tfrac{1}{2}\rho\LL_j^\dag\LL_j\right) \\
	& =-\imath[\HH,\rho]+
	\sum_{j=1}^m\left(\LL_j\rho \LL_j^\dag-\tfrac{1}{2}\{\LL_j^\dag\LL_j,\rho\}\right),
\end{aligned}
\end{equation}
where $[\mathbf{A},\mathbf{B}]\coloneqq\mathbf{A}\mathbf{B}-\mathbf{B}\mathbf{A}$
denotes the \emph{commutator} between two linear operators $\mathbf{A}$ and $\mathbf{B}$,
and $\{\mathbf{A},\mathbf{B}\}\coloneqq\mathbf{A}\mathbf{B}+\mathbf{B}\mathbf{A}$
denotes the \emph{anti-commutator}.
The term $-\imath[\HH,\rho]$ describes the evolution of the internal system;
the term $\LL_j\rho \LL_j^\dag-\tfrac{1}{2}\{\LL_j^\dag\LL_j,\rho\}$ 
represents the interaction between the system and its environment.
In other words,
to characterize the evolution of an open system,
it is necessary to use those linear operators $\LL_j$ besides the Hermitian one $\HH$.

For the given quantum operation~\eqref{eq1},
those linear operators $\LL_j$ ($j>0$) are determined by $\EE_j$
and the Hermitian operator $\HH$ is determined by $\EE_0$.
Conversely, once we are given the operators $\LL_j$ ($j>0$) and $\HH$,
the quantum operation~\eqref{eq1} are also determined.
It entails that the Lindblad master equation is
the most general type of Markovian and time-homogeneous master equation
describing (generally non-unitary) state evolution that preserves the laws of quantum mechanics,
i.e., complete positivity and trace-preservation.
Interested readers can refer to~\cite[Section~3.5]{Pre15}
for more physical explanations about the state evolution.

We now turn to derive the solution of Eq.~\eqref{eq:Lindblad}.
Two useful functions are defined as:
\begin{itemize}[label=$\triangleright$]
	\item $\lv(\gamma)\coloneqq\sum_{i=1}^d \sum_{j=1}^d \bra{i}\gamma\ket{j} \ket{i,j}$
	that rearranges entries of the linear operator $\gamma$
	on the Hilbert space $\h$ with dimension $d$
	as a column vector; and
	\item $\vl(\mathbf{v})\coloneqq\sum_{i=1}^d \sum_{j=1}^d \bra{i,j} \mathbf{v} \op{i}{j}$
	that rearranges entries of the column vector $\mathbf{v}$ as a linear operator.
\end{itemize}
Here, $\lv$ and $\vl$ are pronounced
``linear operator to vector'' and ``vector to linear operator'', respectively.
They are mutually inverse functions,
so that
if a linear operator (resp.~its vectorization) is determined,
its vectorization (resp.~the original linear operator) is determined.
Hence, we can freely choose one of the two representations for convenience.
For any linear operators $\mathbf{A},\mathbf{B},\mathbf{C}$,
their product $\mathbf{D}=\mathbf{A}\mathbf{B}\mathbf{C}$ has the transformation
\[
	\lv(\mathbf{D})=(\mathbf{A} \otimes \mathbf{C}^\T)\lv(\mathbf{B}),
\]
where $\T$ denotes transpose.
Then, we can reformulate Eq.~\eqref{eq:Lindblad} as the linear ordinary differential equation
\begin{equation}\label{eq:ODE}
	\lv(\rho') = \M \cdot\lv(\rho),
\end{equation}
where $\M=-\imath\HH\otimes\id+\imath\id\otimes\HH^\T
+\sum_{j=1}^m \big( \LL_j\otimes\LL_j^{*}
-\tfrac{1}{2}\LL_j^\dag\LL_j\otimes\id
-\tfrac{1}{2}\id\otimes \LL_j^\T\LL_j^* \big)$ is the \emph{governing matrix}
with $*$ denoting entry-wise complex conjugate.
As a result,
we get the desired solution $\lv(\rho(t))=\exp(\M\cdot t)\cdot\lv(\rho(0))$
or equivalently $\rho(t)=\vl(\exp(\M\cdot t)\cdot\lv(\rho(0)))$ in a closed form.
It is obtained in polynomial time by the standard method~\cite{Kai80}.

\subsection{Number Theory}
We will show the decidability of a temporal logic,
which is based on the following essential facts in number theory.
\begin{defi}
	A number $\alpha$ is \emph{algebraic},
	denoted $\alpha \in \mathbb{A}$,
	if there is a nonzero $\mathbb{Q}$-polynomial $f_\alpha(z)$,
	satisfying $f_\alpha(\alpha)=0$;
	otherwise $\alpha$ is \emph{transcendental}.
\end{defi}
\noindent In the above definition, such a polynomial $f_\alpha(z)$ is called
the \emph{minimal polynomial} of $\alpha$ if it is irreducible.
The \emph{degree} $D$ of $\alpha$ is $\deg_z(f_\alpha)$.
A popular encoding of $\alpha$~\cite[Subsection~4.2.1]{Coh93}
is using the minimal polynomial $f_\alpha$
plus an isolation disk in the complex plane
that distinguishes $\alpha$ from other roots of $f_\alpha$.
So the encoding size $\|\alpha\|$ is the number of bits used
to store the minimal polynomial and the isolation disk.

\begin{defi}
	Let $\mu_1,\ldots,\mu_m$ be irrational complex numbers.
	Then the \emph{field extension} $\mathbb{Q}(\mu_1,\ldots,\mu_m):\mathbb{Q}$
    is the smallest set
	that contains $\mu_1,\ldots,\mu_m$ and is closed under arithmetic operations,
	i.e. addition, subtraction, multiplication and division.
\end{defi}
\noindent Here those irrational complex numbers $\mu_1,\ldots,\mu_m$ are called
the generators of the field extension.
A field extension is \emph{simple} if it has only one generator.
For instance, the field extension $\mathbb{Q}(\sqrt{2}):\mathbb{Q}$
is exactly the set $\{a+b\sqrt{2} \colon a,b \in \mathbb{Q}\}$.

\begin{lemC}[{\cite[Algorithm~2]{Loo83}}]\label{lem:simple}
	Let $\alpha_1$ and $\alpha_2$ be two algebraic numbers of
	degrees $D_1$ and $D_2$, respectively.
	There is an algebraic number $\mu$ of degree at most $D_1 D_2$,
	such that the field extension $\mathbb{Q}(\mu):\mathbb{Q}$
	is exactly $\mathbb{Q}(\alpha_1,\alpha_2):\mathbb{Q}$.
\end{lemC}
\noindent For a collection of algebraic numbers $\alpha_1,\ldots,\alpha_m$
appeared in the input instance,
by repeatedly applying Lemma~\ref{lem:simple},
we can obtain a simple field extension $\mathbb{Q}(\mu):\mathbb{Q}$
that can generate all $\alpha_1,\ldots,\alpha_m$.

\begin{lemC}[{\cite[Corollary~4.1.5]{Coh93}}]\label{lem:closed}
	Let $\alpha$ be an algebraic number of degree $D$,
	and $g(z)$ a polynomial with degree $D_g$
	and coefficients taken from $\mathbb{Q}(\alpha):\mathbb{Q}$.
	There is a $\mathbb{Q}$-polynomial $f(z)$ of degree at most $DD_g$,
	such that the roots of $g(z)$ are those of $f(z)$.
\end{lemC}
\noindent The above lemma entails that
roots of any polynomial with coefficients taken from algebraic numbers
($\mathbb{A}$-polynomial for short)
are also algebraic.
	
\begin{thmC}[(Lindemann 1882) {\cite[Theorem~1.4]{Bak75}}]\label{Lindemann}
	For any nonzero algebraic numbers $\alpha_1,\ldots,\alpha_m$ and
	any distinct algebraic numbers $\lambda_1,\ldots,\lambda_m$,
	the sum $\sum_{i=1}^m \alpha_i \mathrm{e}^{\lambda_i}$ with $m \ge 1$ is nonzero.
\end{thmC}
\noindent Hence the sign of a given real number
of the form $\sum_{i=1}^m \alpha_i \mathrm{e}^{\lambda_i}$
is decidable,
which can be achieved by sufficiently approximating the Euler constant $\e$.
But the complexity of the procedure is unknown as studied in the existing literature~\cite{COW16,HLX+18}.
It becomes a query operation (also known as oracle) in our work.

\section{Quantum Continuous-Time Markov Chain}\label{S3}
In this section, we first introduce the model of quantum continuous-time Markov chain,
which subsumes the classical continuous-time Markov chain.
Based on that, we will establish a probability space over paths
to reason about real-time properties.
An algebraic approach using matrix exponentiation is presented
to calculate those measurable events in the probability space.

\begin{defi}
	Let $AP$ be a set of atomic propositions throughout this paper.
	A labelled \emph{quantum continuous-time Markov chain} (quantum CTMC for short) $\QC$
	over the Hilbert space $\h$ is a triple $(S,Q,L)$, in which
	\begin{itemize}
		\item $S$ is a finite set of classical states;
		\item $Q$ is the transition generator function given by two parts of information:
		\begin{itemize}
			\item a Hermitian operator $\HH$
			of the form $\sum_{s\in S} \op{s}{s} \otimes \HH_s$
			where $\HH_s$ are Hermitian operators on $\h$,
			\item a finite set of linear operators $\LL_j$ of the form
			$\op{s_{j,2}}{s_{j,1}} \otimes \LL_{s_{j,1},s_{j,2}}$
			for some $s_{j,1},s_{j,2}\in S$, ($s_{j,1} \ne s_{j,2}$),
			where $\LL_{s_{j,1},s_{j,2}}$ are linear operators on $\h$;
			and
		\end{itemize}
		\item $L\colon S \to 2^{AP}$ is a labelling function
		that labels each classical state with some atomic propositions in $AP$.
	\end{itemize}
\end{defi}

To distinguish the dynamic states $\rho(t)$ of $\QC$ in time $t$
from other static (classical or quantum) states,
we call $\rho(t)$ by instantaneous descriptions (IDs).
Usually, a density operator of the form $\rho(0)=\sum_{s\in S} \op{s}{s} \otimes \rho_s$
is appointed as the initial ID of $\QC$,
where $\rho_s \in \D_\h^{\le 1}$ satisfy $\sum_{s\in S} \tr(\rho_s)=1$.

Every classical state $s \in S$ can be quantized by a pure state $\ket{s}$,
satisfying that all of them are orthogonal pairwise.
Specifically, if the system is in a classical state $s$,
it can be thought of being in the quantum state $\ket{s}$ with unit probability mass.
Further, the superpositions, say $c_1\ket{s_1}+c_2\ket{s_2}$,
can be quantized states between the basis states $\ket{s_1}$ and $\ket{s_2}$
with probability mass $|c_1|^2$ and $|c_2|^2$ respectively.
Let $\mathcal{C}\coloneqq\spn(\{\ket{s}\colon s\in S\})$ be the Hilbert space corresponding to the classical system,
and $\h_\cq\coloneqq\mathcal{C} \otimes \h$ the enlarged Hilbert space
corresponding to the whole classical--quantum system.
The dimension of $\h_\cq$ is $N\coloneqq nd$ where $n=|S|$ and $d=\dim(\h)$.
This parameter $N$ is used to reflect the encoding size of the quantum CTMC $\QC$.

The IDs $\rho$ for a quantum CTMC $\QC$ are represented by density operators on the enlarged Hilbert space $\h_\cq$
with the mixed structure $\sum_{s\in S} \op{s}{s} \otimes \rho_s$
where $\rho_s \in \D_\h^{\le 1}$ are partial density operators,
satisfying $\sum_{s\in S} \tr(\rho_s)=1$.
Such a mixed structure entails that
the entries $(\bra{s_1} \otimes \id) \rho (\ket{s_2} \otimes \id)$
are zero whenever $s_1 \ne s_2$.
It underlies the key difference
between the classical state space $\mathcal{C}$ and the quantum state space $\h$ that:
\begin{itemize}
	\item The classical state space $\mathcal{C}$ has finitely many basis states $\ket{s}$ with $s\in S$,
	which are known and fixed a priori.
	After tracing out the quantum state space $\h$,
	the reduced density operator on $\mathcal{C}$ is a probability distribution over these classical states,
	and cannot be a \emph{superposition}.
	\item The quantum state system $\h$ has uncountably many pure states $\ket{\psi}$,
	due to the postulate of quantum mechanics~\cite{NiC00}
	that the space of pure states is a \emph{continuum}.
	Those pure states $\ket{\psi}$ are superpositions on some orthonormal basis of $\h$.
	If the orthonormal basis is chosen properly,
	the pure states $\ket{\psi}$ would be expressed much explicitly, thus facilitating the subsequential computation.
	But properly choosing in the orthonormal basis cannot be guaranteed in advance.
	It will bring extra technical hardness to verifying quantum models.
\end{itemize}
We can regard these IDs $\rho$ as positive-operator valued distributions over classical states $S$,
which generalize probability distributions representing the IDs of a classical CTMC.
Under the mixed structure,
the current model of quantum CTMC is more precise than
the one proposed in~\cite{XMG+21} that does \emph{not} clarify
the difference between the classical state space and the quantum state space.

The transition generator function $Q$
is functionally similar to the transition rate matrix
in the classical CTMC,
but employs the Lindblad master equation to characterize the continuous-time transition.
It gives rise to a \emph{universal} way to describe the behavior of a quantum CTMC
that keeps the ID $\rho$ in the mixed structure $\sum_{s\in S} \op{s}{s} \otimes \rho_s$,
following from the generality of the Lindblad master equation.
In more detail, we can see:
\begin{itemize}
\item Each $\HH_s$ determines the internal evolution at classical state $s$,
i.e., $-\imath [\HH_s,\rho_s]$ is one term of $\rho_s'$,
and $\HH=\sum_{s\in S} \op{s}{s} \otimes \HH_s$ gives all such individual internal evolution
by $-\imath [\HH,\rho]$.
\item Whereas, $\LL_j=\op{s_{j,2}}{s_{j,1}}\otimes\LL_{s_{j,1},s_{j,2}}$
determines the external evolution from classical state $s_1$ to $s_2$,
i.e., $\LL_j\rho\LL_j^\dag-\tfrac{1}{2}\{\LL_j,\rho\}$ is one term appeared in $\rho'$.
\end{itemize}
It is notable that, as required in Eq.~\eqref{eq:Lindblad},
$\HH_s$ can be chosen as \emph{arbitrary} Hermitian operators
and $\LL_{s_{j,1},s_{j,2}}$ can be \emph{arbitrary} linear operators,
which makes the model very expressive.
However, there is also an assumption $s_{j,1} \ne s_{j,2}$
on linear operators $\LL_j = \op{s_{j,2}}{s_{j,1}} \otimes \LL_{s_{j,1},s_{j,2}}$,
so the model is less general than possible.
The assumption follows the convention of classical CTMC that
the system has no interactions with the outside environment except when changing classical states.
For another, the Hilbert space $\h_s$ for each classical state $s$ can be chosen to be different,
but they are same here only for facilitating the subsequential analysis.
A state $s$ is \emph{absorbing} if all transitions from it are removed,
which is achieved by some projectors to be defined later.

Recall that a classical CTMC $\mathfrak{C}$ is a triple $(S,\mathbf{Q},L)$,
where $S$ and $L$ are the same as the components in the quantum CTMC $\QC=(S,Q,L)$,
and $\mathbf{Q}\colon (S \times S) \to \mathbb{R}_{\ge 0}$ is the transition rate matrix,
i.e., $\mathbf{Q}[s_1,s_2]$ gives the transition rate from state $s_1$ to $s_2$.
In fact, the model of quantum CTMC $\QC=(S,Q,L)$ extends
that of classical CTMC $\mathfrak{C}=(S,\mathbf{Q},L)$,
so the model of quantum CTMC is more expressive.
It can be seen from the following lemma:
\begin{lemC}[{\cite[Lemma~10]{XMG+21}}]
	Given a CTMC $\mathfrak{C}$,
	it can be modelled by a quantum CTMC $\QC$ over one-dimensional Hilbert space $\h$.
\end{lemC}

For the sake of computability,
the entries of $\rho_s$, $\HH_s$ and $\LL_{s_{j,1},s_{j,2}}$
in the model of quantum CTMC
are supposed to be algebraic numbers
in the simple field extension $\mathbb{Q}(\mu):\mathbb{Q}$ for some algebraic number $\mu$.
The encoding size $\|\mu\|$ is used to reflect the encoding size of operands in $\QC$.
Together with the dimension $N$ of $\h_\cq$,
they are two key parameters in the encoding size $\|\QC\|$,
which will be used to analyze the complexity of our method afterward.

\begin{exa}\label{ex1}
	Here we will study the open quantum walk (OQW) along Apollonian networks (ANs)~\cite{AHA+05}.
	ANs have small-world and scale-free properties.
	The small-world property means that
	there is a relatively short path between two nodes despite the often large size of networks,
	and the scale-free property implies that
	most of the nodes in the network are connected to only a few nodes 
	while few nodes are connected to a large number of nodes.
	With the two properties, 
	ANs provide an excellent facility to analyze dynamical processes taking place on networked systems, 
	including percolation and electrical conduction.
    In the meanwhile, ANs serve as a valuable tool for studying properties of various walk models,
    including (discrete- and continuous-time) random walks and quantum walks~\cite{XLL08}.
    
	ANs are generated as follows:
	At the initial generation, 
	the network consists of three nodes marked as 0, 1 and 2, respectively.
	At each subsequent generation, 
	a new node is added inside each triangle and linked to the three vertices of that triangle,
	as shown in~\figurename~\ref{fig:apollonian}.

\begin{figure}
	\begin{subfigure}[b]{0.32\textwidth}
	\centering
	\resizebox{\linewidth}{!}{
	\begin{tikzpicture}[->,>=stealth',auto,node distance=2.5cm,semithick,inner sep=2pt,
	state/.style={circle,draw,minimum size=0.5cm}]
		\node[state,draw] at (2,{2*sqrt(3)}) (s0){$0$};
		\node[state,draw] at  (0,0) (s1){$1$};
		\node[state,draw] at  (4,0) (s2){$2$};
		\draw (s0)  edge[-]node[below]{} (s1);
		\draw (s1)  edge[-]node[above,rotate=-60]{} (s2);
		\draw (s2)  edge[-]node[above,rotate=60]{} (s0);	
	\end{tikzpicture}
	}
	\caption{0th generation}\label{fig:apollonian0}
	\end{subfigure}
	\begin{subfigure}[b]{0.32\textwidth}
	\centering
	\resizebox{\linewidth}{!}{
	\begin{tikzpicture}[->,>=stealth',auto,node distance=2.5cm,semithick,inner sep=1pt,
	state/.style={circle,draw,minimum size=0.5cm}]
		\node[state,draw] at (2,{2*sqrt(3)}) (s0){$0$};
		\node[state,draw] at  (0,0) (s1){$1$};
		\node[state,draw] at  (4,0) (s2){$2$};
		\node[state,draw] at  (2,{2/sqrt(3)}) (s3){$3$};	
		\draw (s0)  edge[-]node[below]{}node[font=\scriptsize,above,rotate=60]{$\leftarrow A$}
		node[font=\scriptsize,below,rotate=60]{$B \rightarrow$} (s1);
		\draw (s1)  edge[-]node[font=\scriptsize,above]{$\leftarrow B$}
		node[font=\scriptsize,below]{$A \rightarrow$} (s2);
		\draw (s2)  edge[-]node[font=\scriptsize,above,rotate=-60]{$\leftarrow A$}
		node[font=\scriptsize,below,rotate=-60]{$B \rightarrow$} (s0);
		\draw (s2)  edge[-]node[font=\tiny,above,rotate=-30,xshift=-4]{$A+C/\sqrt{3}\rightarrow$}
		node[font=\scriptsize,below,rotate=-30]{$\leftarrow C$} (s3);
		\draw (s3)  edge[-]node[font=\scriptsize,above,rotate=90]{$\leftarrow C$}
		node[font=\scriptsize,below,rotate=90]{$\tfrac{1}{\sqrt{3}}C \rightarrow$} (s0);
		\draw (s3)  edge[-]node[font=\tiny,above,rotate=30,xshift=4]{$\leftarrow B+C/\sqrt{3}$}
		node[font=\scriptsize,below,rotate=30]{$C \rightarrow$} (s1);
	\end{tikzpicture}
	}
	\caption{1st generation}\label{fig:apollonian1}
	\end{subfigure}
	\begin{subfigure}[b]{0.32\textwidth}
	\centering
	\resizebox{\linewidth}{!}{
	\begin{tikzpicture}[->,>=stealth',auto,node distance=2.5cm,semithick,inner sep=1pt,
	state/.style={circle,draw,minimum size=0.5cm}]
		\node[state,draw] at (2,{2*sqrt(3)}) (s0){$0$};
		\node[state,draw] at  (0,0) (s1){$1$};
		\node[state,draw] at  (4,0) (s2){$2$};
		\node[state,draw] at  (2,{2/sqrt(3)}) (s3){$3$};
		\node[state,draw,minimum size=0.2cm] at  ({2*sqrt(3)-2},{2*sqrt(3)-2}) (s4){$4$};
		\node[state,draw,minimum size=0.2cm] at  (2,{4-2*sqrt(3)}) (s5){$5$};
		\node[state,draw,minimum size=0.2cm] at  ({6-2*sqrt(3)},{2*sqrt(3)-2}) (s6){$6$};				
		\draw (s0)  edge[-]node[below]{} (s1);
		\draw (s1)  edge[-]node[above,rotate=-60]{} (s2);
		\draw (s2)  edge[-]node[above,rotate=60]{} (s0);
		\draw (s2)  edge[-]node[above,rotate=90]{} (s3);
		\draw (s3)  edge[-]node[below,rotate=40]{} (s0);
		\draw (s3)  edge[-]node[below,rotate=-40]{} (s1);
		\draw (s4)  edge[-]node[above,rotate=90]{} (s0);
		\draw (s4)  edge[-]node[below,rotate=40]{} (s1);
		\draw (s4)  edge[-]node[below,rotate=-40]{} (s3);
		\draw (s5)  edge[-]node[above,rotate=90]{} (s1);
		\draw (s5)  edge[-]node[below,rotate=40]{} (s2);
		\draw (s5)  edge[-]node[below,rotate=-40]{} (s3);
		\draw (s6)  edge[-]node[above,rotate=90]{} (s0);
		\draw (s6)  edge[-]node[below,rotate=40]{} (s2);
		\draw (s6)  edge[-]node[below,rotate=-40]{} (s3);
	\end{tikzpicture}
	}
	\caption{2nd generation}\label{fig:apollonian2}
    \end{subfigure}
	\caption{The first three generations of ANs}\label{fig:apollonian}
\end{figure}
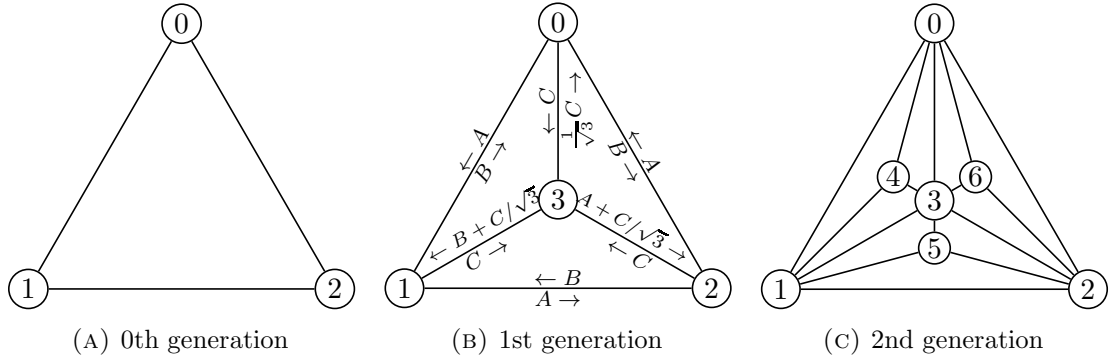

    Consider a qutrit, a quantum state in a 3-dimensional Hilbert space,
	is walking along the first generation of AN~\cite{PGM+15},
	which is depicted in \figurename~\ref{fig:apollonian1}.
	As the qutrit may interact with its environment,
	some noise from the network channel will affect the walking of the qutrit
	which can be described by linear operators,
	say, $A=\op{x}{x}$, $B=\op{y}{y}$ and $C=\op{z}{z}$,
	where $\ket{x}=(1,1,1)^\T/\sqrt{3}$, $\ket{y}=(1,w,w^2)^\T/\sqrt{3}$ and $\ket{z}=(1,w^2,w)^\T/\sqrt{3}$
	with $w=\exp(2\pi\imath/3)$.
	The OQW can be modelled by the quantum CTMC $\QC_1=(S,Q,L)$
	over the Hilbert space $\h=\spn(\{\ket{0},\ket{1},\ket{2}\})$,
	where $S=\{0,1,2,3\}$ represents the four nodes;
	Node $3$ has a label ``\texttt{center}'' and other nodes have none;
	the transition function $Q$ is given by the Hermitian operator $\HH=0$
	and the following 12 linear operators:
	\[
	\begin{aligned}
		& \LL_1 = \op{1}{0}\otimes A
		&& \LL_2 = \op{2}{0}\otimes B 
		&& \LL_3 = \op{3}{0}\otimes C \\
		& \LL_4 = \op{2}{1}\otimes A	
		&& \LL_5 = \op{0}{1}\otimes B
		&& \LL_6 = \op{3}{1}\otimes C \\
		& \LL_7 = \op{0}{2}\otimes A
		&& \LL_8 = \op{1}{2}\otimes B	
		&& \LL_9 = \op{3}{2}\otimes C \\
		& \LL_{10} = \op{0}{3}\otimes C/\sqrt{3} 
		&& \LL_{11} = \op{1}{3}\otimes (B+C/\sqrt{3})
		&& \LL_{12} = \op{2}{3}\otimes (A+C/\sqrt{3}).
	\end{aligned}
	\]
	An initial ID $\rho(0)$ could be, say, $\op{3}{3}\otimes\tfrac{1}{3}\id$.
	Here, the left operand $\op{3}{3}$ in the tensor product means that
	we are definitely in Node $3$ in the classical system,
	the right operand $\tfrac{1}{3}\id$ means that
	we are in one of three basis states $\ket{0}$ through $\ket{2}$
	with equal probability $\tfrac{1}{3}$ in the quantum system,
	and they together determine the ID in the whole classical--quantum system. \qed
\end{exa}

A path quantified over the classical states of a quantum CTMC $\QC$ is an infinite sequence
$\omega = s_0 \xrightarrow{\tau_0} s_1 \xrightarrow{\tau_1} s_2 \xrightarrow{\tau_2} \cdots$.
(For conciseness we call those paths quantified over the classical states simply by paths afterwards.)
It means that $\omega$ starts at $s_0$, and sojourns there up to time $t_0=\tau_0$ (but excluding $\tau_0$);
then $\omega$ jumps to $s_1$ at time $t_0$,
and sojourns there up to time $t_1=\tau_0+\tau_1$;
then $\omega$ jumps to $s_2$ at time $t_1$,
and sojourns there up to time $t_2=t_1+\tau_2$;
and so on.
The times $\tau_k$ ($k \ge 0$) are called \emph{sojourn} times
that are measured from the instants
when the prior transitions $s_{k-1} \rightarrow s_k$ take place;
while the times $t_k$ ($k \ge 0$) are \emph{switch} time-stamps
that are measured from the start of the path.
A finite path $\hat{\omega}$ is a finite fragment of such a path $\omega$,
e.g. $s_0 \xrightarrow{\tau_0} s_1 \xrightarrow{\tau_1} s_2 \xrightarrow{\tau_2} \cdots
\xrightarrow{\tau_{K-1}} s_K$.
Namely, the state $\omega(t)$ of $\omega$ at time $t$ is $s_k$
for $t\in[t_{k-1},t_k)$,
where $t_{-1}$ is defined to be zero for convenience.
Let $Path$ be the set of paths of $\QC$.

To formally reason about quantitative properties of quantum CTMC,
we will establish the probability space over paths as follows:
\begin{defi}
	A \emph{measurable space} is a pair $(\Omega,\Sigma)$,
	where $\Omega$ is a nonempty set
	and $\Sigma$ is a $\sigma$-algebra on $\Omega$ that is a collection of subsets of $\Omega$,
	satisfying:
	\begin{itemize}
		\item $\Omega \in \Sigma$, and
		\item $\Sigma$ is closed under countable union and complement.
	\end{itemize}
	In addition, a \emph{probability space} is a triple $(\Omega,\Sigma,\Pr)$,
	where $(\Omega,\Sigma)$ is a measurable space
	and $\Pr\colon \Sigma \to [0,1]$ is a probability measure,
	satisfying:
	\begin{itemize}
		\item $\Pr(\Omega)=1$,
		\item $\Pr(A) \ge 0$ for any $A \in \Sigma$, and
		\item $\Pr(\biguplus_i A_i) = \sum_i \Pr(A_i)$ for any pairwise disjoint $A_i \in \Sigma$.
	\end{itemize}
\end{defi}
Note that any finite path $\hat{\omega}=
s_0 \xrightarrow{\tau_0} s_1 \xrightarrow{\tau_1} s_2 \xrightarrow{\tau_2} \cdots
\xrightarrow{\tau_{K-1}} s_K$ is of probability measure zero,
since it requires that the transitions from $s_{k-1}$ to $s_k$ ($k>0$) take place instantaneously.
It cannot collect any positive probability from a countable union of such events.
To fix it, we use the \emph{cylinder set} of paths,
which is defined as
\begin{equation}\label{eq:cylinder}
	\begin{aligned}
	& Cyl(s_0 \xrightarrow{\J_0} s_1 \xrightarrow{\J_1} s_2 \xrightarrow{\J_2} \cdots
	\xrightarrow{\J_{K-1}} s_K) \coloneqq \\
	& \qquad \left\{\omega \in Path\colon \omega=s_0 \xrightarrow{\tau_0} s_1 \xrightarrow{\tau_1} s_2
	\xrightarrow{\tau_2} \cdots \xrightarrow{\tau_{K-1}} s_K \xrightarrow{\tau_K} \cdots
	\wedge \bigwedge_{k=0}^{K-1}\tau_k\in\J_k\right\}.
	\end{aligned}
\end{equation}
Here, these time intervals $\J_k$ ($0 \le k < K$) are \emph{relative},
as they are measured from the instants
when the prior transitions $s_{k-1} \rightarrow s_k$ take place.
Let $\Omega=Path$,
and $\Pi \subseteq 2^\Omega$ be countably many cylinder sets $Cyl$ of paths
plus the empty set $\emptyset$.
By~\cite[Chapter~10]{BaK08},
there is a smallest $\sigma$-algebra $\Sigma$ of $\Pi$ containing $\Pi$,
such that the pair $(\Omega,\Sigma)$ forms a measurable space.

Next, for a cylinder set
$Cyl=s_0 \xrightarrow{\J_0} s_1 \xrightarrow{\J_1} s_2 \xrightarrow{\J_2}
\cdots \xrightarrow{\J_{K-1}} s_K$,
we define the probability measure along $Cyl$
as $\Pr(Cyl)=\tr(\rho^{(K)})$,
where $\rho^{(K)}$ is the partial density operator in the $K$th phase.
It can be calculated inductively by
\begin{equation}\label{eq:inductive}
	\left\{
	\begin{aligned}
		\rho^{(0)} =&\ \PP_{s_0} \cdot \rho(0) \\
		\rho^{(k)} =&\ \PP_{s_k} \cdot \vl(\exp(\M_{s_{k-1}} |\J_{k-1}|) \ \cdot \\
		&\ \lv(\PP_{s_{k-1}} \cdot \vl(\exp(\M_{s_{k-1}} \inf \J_{k-1}) \cdot \lv(\rho^{(k-1)}))))
		& \mbox{if } k>0, \\
	\end{aligned}
	\right.
\end{equation}
where $|\J_{k-1}|=\sup\J_{k-1}-\inf\J_{k-1}$ is the length of $\J_{k-1}$,
and $\M_s$ is the governing matrix adapted for
the transition generator function $Q_s$ consisting of
the Hermitian operator $\HH\cdot\PP_s$ and finitely many linear operators $\LL_j\PP_s$.
It is preferable to calculate those matrix multiplications in Eq.~\eqref{eq:inductive}
with right associativity, according to the forward semantics.
Specifically,
$\PP_{s_{k-1}} \cdot \rho$ extracts
the term $\op{s_{k-1}}{s_{k-1}} \otimes \rho_{s_{k-1}}$
from the ID $\rho=\sum_{s\in S} \op{s}{s} \otimes \rho_{s}$;
$\exp(\M_{s_{k-1}} \tau)$ gives rise to the state transition function,
which carries the density operator of state $s_{k-1}$
to all states $s$ ($s\in S$) for some sojourn time $\tau$
and preserves the density operator of state $s$ ($s \ne s_{k-1}$).
Totally, the induction is
computing the final partial density operator $\rho^{(k)}$ from $\rho^{(k-1)}$,
satisfying a sojourn at $s_{k-1}$ for some time $\tau_{k-1}\in\J_{k-1}$.
Similar to Vardi's work for classical Markov chains~\cite{Var85},
the domain of $\Pr$ can be extended to $\Sigma$,
i.e., $\Pr\colon\Sigma\to [0,1]$,
which is well-defined under the countable union $\bigcup_i A_i$ for any $A_i \in \Sigma$
and the complement $A^\textup{c}$ for any $A \in \Sigma$.
Hence the triple $(\Omega,\Sigma,\Pr)$ forms a probability space.

\begin{exa}\label{ex2}
	Let us continue to consider
	Example~\ref{ex1}.
	The cylinder set $Cyl_1=3\xrightarrow{(0,1)} 1 \xrightarrow{(1,2)} 3$
	contains all paths that take the transitions from Node $3$ to Node $1$ and then back to Node $3$
	with sojourn time in the intervals $\J_1 = (0,1)$ and $\J_2 = (1,2)$ respectively.
	We will compute its probability with the initial ID $\rho(0)=\op{3}{3}\otimes\op{0}{0}$.
	Firstly, we have the projectors $\PP_i = \op{i}{i} \otimes \id$ for $i=0,1,2,3$
	and the corresponding adapted governing matrices
	\[
		\M_i = \sum_{j=1}^{12} [ (\LL_j\PP_i)\otimes(\LL_j\PP_i)^*
		-\tfrac{1}{2}(\LL_j\PP_i)^\dag(\LL_j\PP_i)\otimes\id
		-\tfrac{1}{2}\id\otimes (\LL_j\PP_i)^\T(\LL_j\PP_i)^* ].
	\]
	Then, we inductively calculate the partial density operator in the $k$th phase:
	\[
	\begin{aligned}
		\rho^{(0)} & = \PP_{3} \cdot \rho(0) 
		= \op{3}{3}\otimes\op{0}{0}, \\
		\rho^{(1)} & = \PP_{1} \cdot \vl(\exp(\M_{3} |\J_1|)
		\cdot \lv(\PP_{3} \cdot \vl(\exp(\M_{3} \inf \J_1) \cdot
		\lv(\rho^{(0)})))) = \op{1}{1}\otimes \rho_1, \\
		\rho^{(2)} & = \PP_{3} \cdot \vl(\exp(\M_{1} |\J_2|)
		\cdot \lv(\PP_{1} \cdot \vl(\exp(\M_{1} \inf \J_2) \cdot
		\lv(\rho^{(1)})))) = \op{3}{3}\otimes \rho_3,
	\end{aligned}
	\]
	in which $\rho_1$ and $\rho_3$ are computable partial density operators on $\h$
	with the following numerical approximations obtained
	by the scaling and squaring method embedded with Pad\'{e} approximations~\cite{MVL03,AMH09}
	for matrix exponentials.
	That is, $\exp(\M)=\big(\exp(\M/m)\big)^m$
	and $\exp(\M/m) \approx \mathds{P}_k$ for some $k,m$,
	where $\mathds{P}_k$ is the $k$th Pad\'{e} approximation to be described in Eq.~\eqref{eq:Pade},
	\[
	\begin{aligned}
		\rho_1 \approx \ & 
		0.184318\op{0}{0} + (-0.092159 - 0.037415 \imath)[\op{0}{1} + \op{2}{0}] \ + \\
		& (-0.092159 + 0.037415 \imath)[\op{0}{2} + \op{1}{0}] + 0.053744 \op{1}{1} \ + \\
		& (0.038416 - 0.037415\imath)\op{1}{2} + (0.038416 + 0.037415 \imath) \op{2}{1} + 0.053744\op{2}{2}], \\
		\rho_3 \approx \ & 
		0.016333\op{0}{0} + (-0.008166 + 0.014145\imath)[\op{0}{1} + \op{1}{2}] \ + \\
		& (-0.008166 - 0.014145\imath)[\op{0}{2} + \op{1}{0}] + 0.016333\op{1}{1} \ + \\
		& (-0.008166 + 0.014145\imath)\op{2}{0} + (-0.008166 - 0.014145\imath) \op{2}{1} + 0.016333\op{2}{2}.
	\end{aligned}
	\]
	Finally we get the probability $\Pr(Cyl_1)=\tr(\rho^{(2)})=\tr(\rho_3) \approx 0.048999$. \qed
\end{exa}

\paragraph{Complexity}
By Eq.~\eqref{eq:inductive},
the procedure of exactly calculating the probability $\Pr(Cyl)$ requires $K$ induction steps,
for each of which the matrix exponentiation $\exp(\M)$ dominates the computational cost.
The matrix exponential $\exp(\M)$ can be computed by Jordan decomposition as follows.
Since $\M$ is a matrix of dimension $N^2$
with entries being algebraic numbers taken from the simple field extension $\mathbb{Q}(\mu):\mathbb{Q}$,
the characteristic polynomial $g(x)$ of $\M$ is an $\mathbb{A}$-polynomial with degree $N^2$,
which can be obtained in $\BigO((N^2)^4)=\BigO(N^8)$ by~\cite[Algorithm~8.17]{BPR06}.
To further determine the eigenvalues of $\M$,
we first convert $g(x)$ to a $\mathbb{Q}$-polynomial $f(x)$.
It is achieved by Sylvester resultant~\cite[Notation~4.12]{BPR06} ---
that is a determinant of dimension $2\deg(\mu)$ ---
to eliminate the common occurrence of $\mu$ in $g(x)$ and $f_\mu$ (the minimal polynomial of $\mu$).
It results in the desired $\mathbb{Q}$-polynomial $f(x)$
with degree at most $D\coloneqq N^2 \deg(\mu)$
has all roots of $g(x)$.
The variable elimination is in $\BigO(D^3)$ by the standard determinant computation.
The roots of $f(x)$, including the eigenvalues of $\M$,
can be determined in $\BigO(D^5 \log(D)^2)$ by~\cite[Algorithm~10.4]{BPR06}.
So we get the Jordan decomposition $\M=\mathds{T} \cdot \mathds{J} \cdot \mathds{T}^{-1}$
in $\BigO((N^2)^3)=\BigO(N^6)$
by computing the eigenvectors with respect to the known eigenvalues,
where $\mathds{J}$ is the Jordan canonical form of $\M$
and $\mathds{T}$ is the transformation matrix.
The same complexity holds for
computing $\exp(\M \tau)=\mathds{T} \cdot \exp(\mathds{J} \tau) \cdot \mathds{T}^{-1}$.
Hence the exactly calculating procedure is in
\begin{equation}
	\BigO(K \cdot D^5 \log(D)^2)
\end{equation}
that is bounded polynomially in the encoding size $\|\QC\|$
and linearly in the length $K=|Cyl|$.
However, if we measure the size of $\QC$ by the number $\log d$ of qubits
used to make up the Hilbert space $\h$ of dimension $d$,
the complexity would turn out to be exponential in $\log d$ as $D= N^2 \deg(\mu)$ and $N=nd$.
It implies that the method cannot scale the large-qubit quantum state system well.

The above method aims to calculate the probability $\Pr(Cyl)$ exactly,
so that the decidability result could be established on later,
i.e. deciding $\Pr(Cyl)\sim \textup{c}$ for some appointed probability threshold $c \in [0,1]$,
where $\mathord{\sim} \in \{\mathord{=}, \mathord{<}, \mathord{>}\}$ is a comparison operator.
However, if we focus on the numerical value of $\Pr(Cyl)$,
an approximating method, say the following one, suffices,
which is potentially more efficient.
Here, we have to address the problem that
the uniformisation method~\cite{BHH+00} is not directly applicable to compute $\exp(\M)$,
since the governing matrix $\M$ has no structure of the generator matrix in a classical CTMC
that is discretized as a Poisson process.

\paragraph{Numerical speed-up}
To improve efficiency,
we employ the state-of-the-art scaling and squaring method~\cite{MVL03,AMH09}
for the general matrix exponentiation.
The technique is \emph{reliable},
as it gives some warnings whenever it introduces excessive errors.
The rationale and steps are described as follows.
\begin{enumerate}
\item We first \emph{scale} $\exp(\M)$ by choosing $m$ to be a power of two,
satisfying $\|\exp(\M/m)\| \le 1$ where $\|\,\cdot\,\|$ denotes Frobenius norm.
Obviously, such a number $m$ exists,
which can further be chosen to be linear in the size $N^4$ of $\M$,
i.e. $m \in \BigO(N^4)$.
\item As a subroutine, we proceed to approximate $\exp(\M/m)$ reliably.
The Pad\'{e} approximation provides such a subroutine if $\|\exp(\M/m)\|$ is small,
which is what have been done in the first step.
The $k$th Pad\'{e} approximation $\mathds{P}_k$ is the fraction
\begin{equation}\label{eq:Pade}
	\left(\sum\limits_{i=0}^k \dfrac{(2k-i)!\cdot k! \cdot (\M/m)^i}{(2k)!\cdot (k-i)!\cdot i!}\right)
	\Big/
	\left(\sum\limits_{i=0}^k \dfrac{(2k-i)!\cdot k!\cdot (-\M/m)^i}{(2k)!\cdot (k-i)!\cdot i!}\right),
\end{equation}
and its computational cost lies in $k$ times of matrix multiplications,
for each of which is in $\BigO((N^2)^3)=\BigO(N^6)$.
The subroutine terminates
whenever the difference $\|\mathds{P}_{k+1}-\mathds{P}_k\|$
of two successive approximations $\mathds{P}_k$ and $\mathds{P}_{k+1}$
is less than a predefined tolerance $\epsilon$.
As the sequence $\{ \mathds{P}_k \}_{k \in \mathbb{N}}$ is exponentially convergent to $\exp(\M/m)$,
i.e., the error is exponentially convergent to $0$,
we get the desired index $k \in \BigO(\log(1/\epsilon))$
or equivalently there is a $k \in \BigO(\|\epsilon\|)$
that yields the approximation $\mathds{P}_k$ within error $\epsilon$ of $\exp(\M/m)$.
\item Finally we keep \emph{squaring} $\exp(\M/m)$ by computing the powers $\mathds{P}_k^{2^l}$
for $l=1,2,\ldots,\log(m)$ in turn,
and eventually get the approximation $\mathds{P}_k^m$ of $\exp(\M)$,
which costs $\log(m)$ times of matrix multiplications.
\end{enumerate}
Hence, the approximately calculating procedure is in
\begin{equation}
	\BigO(K \cdot (k + \log(m)) \cdot N^6)=\BigO(K \cdot (\|\epsilon\|+\log(N)) \cdot N^6).
\end{equation}
At least a factor $N^4$ in the complexity of the exactly calculating procedure is saved here.

\section{Checking Continuous Stochastic Logic}\label{S4}
We introduce the continuous stochastic logic (CSL for short) that consists of three layers:
i) state formulas (which can be true or false in a specific classical state),
ii) path formulas (which can be true or false along a specific path),
and iii) model formulas (true or false for a specific model).
Then we present an algebraic algorithm for checking the CSL formulas over quantum CTMCs.

\begin{defi}
	The syntax of the CSL formulas,
	consisting of state formulas $\Phi$,
	path formulas $\phi$ and model formulas $\chi$,
	are defined as follows:
	\[
	\begin{aligned}
		\Phi & \coloneqq \textup{a} \mid \neg\Phi \mid \Phi_1 \wedge \Phi_2 \\
		\phi & \coloneqq \Phi_0 \ntl^{\I_0} \Phi_1 \ntl^{\I_1} \Phi_2 \cdots \ntl^{\I_{K-1}} \Phi_K \\
		\chi & \coloneqq \mathcal{P}_{\sim \textup{c}}(\phi)
	\end{aligned}
	\]
	where $\textup{a} \in AP$ is an atomic proposition,
	$\I_k \subseteq \mathbb{R}_{\ge 0}$ ($0 \le k <K$) are time intervals with rational endpoints,
	$\mathord{\sim} \in \{\mathord{=}, \mathord{<}, \mathord{>}\}$ is a comparison operator,
	and $\textup{c} \in [0,1] \cap \mathbb{Q}$ is a probability threshold.
\end{defi}
In particular, the model formula $\chi$ is also referred to as the CSL formula.
For convenience, we assume that
those time intervals $\I_k$ are disjoint left-open and right-closed intervals $(a_k,b_k]$,
and $b_{K-1}$ is allowed to be $\infty$.
Otherwise some detailed techniques could be introduced to deal with as in~\cite[Section~5]{ZJN+12}.
The path formula $\phi$ is called the \emph{multiphase until} formula
to be interpreted below.

Following~\cite{ASS+96,BHH+00}, we give the formal semantics of CSL.
\begin{defi}
	The semantics of CSL interpreted over a quantum CTMC $\QC=(S,Q,L)$
	is given by the satisfaction relation $\models$:

	\renewcommand\arraystretch{1.2}
		\begin{tabular}{llll}
			$s$ & $\models \textup{a}$ && if $\textup{a} \in L(s)$; \\
			$s$ & $\models \neg\Phi$ && if $s \not\models \Phi$; \\
			$s$ & $\models \Phi_1 \wedge \Phi_2$ && if $s \models \Phi_1$ and $s \models \Phi_2$; \\
			$\omega$ & $\models \Phi_0 \ntl^{\I_0} \Phi_1 \ntl^{\I_1} \Phi_2 \cdots \ntl^{\I_{K-1}} \Phi_K$
			&& if there exist $t_0 < t_1 < \cdots < t_{K-1}$ \\
			& \multicolumn{3}{l}{\qquad such that for each integer $0 \le k < K$, we have $t_k \in \I_k$} \\
			& \multicolumn{3}{l}{\qquad and $\forall\,t' \in [t_{k-1},t_k) \,:\, \omega(t') \models \Phi_k$
				where $t_{-1}$ is defined to be zero,} \\
			& \multicolumn{3}{l}{\qquad and additionally $\omega(t_{K-1}) \models \Phi_K$;} \\
			$\QC$ & $\models \mathcal{P}_{\sim \textup{c}}(\phi)$
			&& if $\Pr(\{\omega \,:\, \omega \models \phi\}) \sim \textup{c}$.
		\end{tabular}
\end{defi}
\noindent Here,
the real numbers $t_1,\ldots,t_{K-1}$ in the interpretation of the multiphase until formula $\phi$
are switch time-stamps,
and the time intervals $\I_k$ ($0 \le k < K$) in $\phi$ are \emph{absolute} timing
that are measured from the start of the path,
which differs from the LTL interpretation of the nested until formula with the \emph{relative} timing
that are measured from the instants when the prior transitions take place.

The core of the model-checking algorithm lies in measuring all the paths
satisfying the multiphase until formula $\phi$,
which can be tackled by the following lemma.
Before that, we define the following four notions.
\begin{enumerate}
	\item Let $\PP_\Phi\coloneqq\sum_{s\models\Phi} \op{s}{s} \otimes \id$
	be the projector for some state formula $\Phi$,
	and $\P_\Phi=\PP_\Phi \otimes \PP_\Phi$.
	\item The transition generator function $Q_\Phi$ is given by
	the Hermitian operator $\HH_\Phi\coloneqq\HH\cdot\PP_\Phi$
	and the finite set of linear operators $\LL_{\Phi,j}\coloneqq\LL_j\PP_\Phi$.
	Then, the governing matrix $\M_\Phi$ adapted for $Q_\Phi$ is
	\begin{equation}\label{eq:gov}
		-\imath\HH_\Phi\otimes\id+\imath\id\otimes\HH_\Phi^\T
		+\sum_{j=1}^m \left( \LL_{\Phi,j}\otimes\LL_{\Phi,j}^*
		-\tfrac{1}{2}\LL_{\Phi,j}^\dag\LL_{\Phi,j}\otimes\id
		-\tfrac{1}{2}\id\otimes \LL_{\Phi,j}^\T\LL_{\Phi,j}^* \right).
	\end{equation}
	The transition generator function $Q_\Phi$ and the corresponding governing matrix $\M_\Phi$
	keep the behaviors from the $\Phi$-states and absorb the other states.
	\item Recall in probability theory that
	for an exponentially distributed random variable with rate parameter $\kappa$,
	the \emph{cumulative distribution function} $\mathrm{CDF}(t)$ is $1-\exp(\kappa t)$
	that determines the probability of the random variable bounded from above by $t$.
	So the change in the probability for the random variable falling into a time interval $(a,b]$
	is $\mathrm{CDF}(b)-\mathrm{CDF}(a)=\exp(\kappa a)-\exp(\kappa b)$.
	Correspondingly, the matrix exponential $\exp(\M t)$ is the state transition function
	that determines the change in the density operator representing ID in a quantum CTMC.
	\item Again, in probability theory,
	the \emph{probability density function} $\mathrm{PDF}(t)$ is $\kappa \exp(\kappa t)$
	that determines the change rate in the probability of the random variable,
	since $\mathrm{CDF}(b)-\mathrm{CDF}(a)=\int_a^b \mathrm{PDF}(t) \ d t$.
	Correspondingly,
	$\M \cdot \exp(\M t)$ determines the change rate in the density operator.
\end{enumerate}
They are powerful tools to develop the algorithm for checking the CSL formula over quantum CTMCs,
together with definite integrals as those for classical CTMCs in~\cite{XZJ+16}.

\begin{lem}\label{lem:multiphase}
	For a multiphase until formula
	$\phi=\Phi_0 \ntl^{\I_1} \Phi_1 \ntl^{\I_2} \Phi_2 \cdots \ntl^{\I_K} \Phi_K$,
	the final partial density operator $\rho^{(K)}$ measuring
	all the paths that satisfy $\phi$
	can be calculated inductively by
	\begin{equation}\label{eq:cal}
		\lv(\rho^{(k)}) = \begin{cases}
		    \P_{\Phi_0} \cdot \lv(\rho(0)) \qquad k=0, \\
			\P_{\Phi_k} \cdot [\exp(\M_{\Phi_{k-1}} (b_k-b_{k-1}))
			+ \int_{a_k}^{b_k} \exp(\M_{\Phi_k} (b_k-t_k)) \cdot \P_{\Phi_{k-1}} \ \cdot \\
			\qquad \P_{\neg\Phi_{k-1}} \cdot \M_{\Phi_{k-1}} \cdot \exp(\M_{\Phi_{k-1}} (t_k-a_k)) \ d t_k \ \cdot \\
			\qquad \exp(\M_{\Phi_{k-1}} (a_k-b_{k-1}))] \cdot \lv(\rho^{(k-1)})
			\qquad \mbox{if } 0<k<K, \\
			\P_{\Phi_k} \cdot
			[\exp(\M_{\Phi_{k-1} \wedge \neg\Phi_k}(b_k-a_k)) \cdot \P_{\Phi_{k-1}} + \P_{\Phi_{k-1}}] \ \cdot \\
			\qquad \exp(\M_{\Phi_{k-1}} (a_k-b_{k-1})) \cdot \lv(\rho^{(k-1)})
			\qquad \mbox{if } k=K,
		\end{cases}
	\end{equation}
	where $a_k=\inf \I_k$, $b_k=\sup \I_k$ and $b_0$ is defined to be zero for convenience;
	and thus the satisfaction probability $\Pr(\{\omega \colon \omega \models \phi\})$
	is $\tr(\rho^{(K)})$.
\end{lem}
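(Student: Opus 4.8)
The plan is to read $\rho^{(k)}$ as a \emph{sub-density operator frozen at the reference time} $b_k$: its trace is to record the $\Pr$-measure of exactly those paths that have already met the phase-$0$ through phase-$k$ constraints of $\phi$ for some admissible choice of switch times $t_1\in\I_1,\ldots,t_k\in\I_k$ and that currently occupy a $\Phi_k$-state. The recurrence~\eqref{eq:cal} is then established by a phase-by-phase first-passage decomposition of the event $\{\omega\colon\omega\models\phi\}$.

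First I would observe that $\{\omega\colon\omega\models\phi\}$ is measurable: since $S$ is finite and each $\I_k$ has rational endpoints, this event is a countable combination (a union over finitely many classical-state sequences, together with countable unions and complements of the cylinder sets of Eq.~\eqref{eq:cylinder}), hence it lies in $\Sigma$ and carries a well-defined $\Pr$-value. I would then isolate the one analytic fact linking matrix exponentials to $\Pr$: by the construction of $\Pr$ from the cylinder sets in Eq.~\eqref{eq:inductive}, the $\Pr$-mass of path segments that remain in $\Phi$-states for a further duration $\tau$ and then occupy a prescribed set of classical states equals the trace of $\exp(\M_\Phi\tau)$ applied to the (vectorised) current ID and then cut down by the corresponding projector $\P$; differentiating in $\tau$, the operator $\P_{\neg\Phi}\cdot\M_\Phi\cdot\exp(\M_\Phi\tau)$ is the vectorised density of the \emph{first-exit} event ``leave the $\Phi$-states for the first time at $\tau$'', which is the quantum counterpart of the probability density function recalled just before the lemma statement.

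With these tools the induction on $k$ runs as follows. The base case $k=0$ is immediate: $b_0=0$, the only phase-$0$ requirement is $\omega(0)\models\Phi_0$, and $\lv(\rho^{(0)})=\P_{\Phi_0}\cdot\lv(\rho(0))$ is the initial ID restricted to $\Phi_0$-states. For $0<k<K$ I would partition the paths counted by $\rho^{(k)}$ according to when the path first leaves the $\Phi_{k-1}$-states after $b_{k-1}$. A path with any chance of meeting the phase-$k$ constraint must stay in $\Phi_{k-1}$ throughout $[b_{k-1},a_k]$ --- otherwise no admissible switch time $t_k>a_k$ survives --- which produces the rightmost factor $\exp(\M_{\Phi_{k-1}}(a_k-b_{k-1}))$ acting on $\lv(\rho^{(k-1)})$. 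If the path never leaves $\Phi_{k-1}$ before $b_k$, the canonical switch time is $t_k=b_k\in\I_k$, contributing the term $\P_{\Phi_k}\cdot\exp(\M_{\Phi_{k-1}}(b_k-b_{k-1}))$; if it first leaves at some $t_k\in(a_k,b_k)$, then admissibility forces the exit state to satisfy $\Phi_k$ and the path to remain in $\Phi_k$ on $[t_k,b_k]$, contributing the integral of the first-exit density at $t_k$ against $\exp(\M_{\Phi_k}(b_k-t_k))$, projected onto $\Phi_k$. These branches are disjoint and together exhaust the relevant paths, so their sum is precisely the formula in~\eqref{eq:cal}. The terminal step $k=K$ repeats the argument with ``first \emph{entry} into $\Phi_K$'' in place of ``stay in $\Phi_K$'': one splits on whether $\omega(a_K)\models\Phi_K$ (giving the bare $\P_{\Phi_{K-1}}$ summand, with $t_K$ taken just above $a_K$) or $\omega(a_K)\models\Phi_{K-1}\wedge\neg\Phi_K$ (giving the $\exp(\M_{\Phi_{K-1}\wedge\neg\Phi_K}(b_K-a_K))$ summand, which survives the final $\P_{\Phi_K}$ exactly when the first entry into $\Phi_K$ occurs by time $b_K$). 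Taking $\tr(\rho^{(K)})$ then returns $\Pr(\{\omega\colon\omega\models\phi\})$, and the case where the last interval $\I_K$ is unbounded ($b_K=\infty$) follows by letting the corresponding matrix exponential pass to its stationary absorption limit.

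The step I expect to be the real obstacle is the bookkeeping that makes this first-passage decomposition a genuine \emph{partition}: one must check that the canonical switch times are well defined on the paths in question, that every path satisfying $\phi$ lands in exactly one branch of every phase, and --- most delicately --- that a path which momentarily sits in $\Phi_{k-1}\wedge\Phi_k$ but then leaves $\Phi_{k-1}$ too early to reach the next time interval is correctly \emph{excluded} by the absorbing behaviour encoded in $\exp(\M_{\Phi_k}(b_k-t_k))\cdot\P_{\Phi_k}$ (respectively in $\M_{\Phi_{K-1}\wedge\neg\Phi_K}$) rather than double-counted. Everything else --- the matrix-algebraic rewritings, the right-associative evaluation order, and convergence of the integrals over the bounded $\I_k$ --- is routine once the first-exit-density fact is in place.
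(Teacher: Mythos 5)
Your proposal is correct and follows essentially the same route as the paper's proof: an induction on the phase index in which $\rho^{(k)}$ measures the paths satisfying $\phi$ up to the reference time $b_k$, with each phase split into the same two disjoint cases (first exit from $\Phi_{k-1}$ at some $t_k\in\I_k$ versus never leaving before $b_k$ for $0<k<K$, and first entry into $\Phi_K$ after $a_K$ versus immediate entry at $a_K$ for $k=K$), using $\P_{\neg\Phi_{k-1}}\cdot\M_{\Phi_{k-1}}\cdot\exp(\M_{\Phi_{k-1}}\tau)$ as the first-exit density exactly as the paper does. Your added remarks on measurability and on verifying that the two branches form a genuine partition are refinements of presentation rather than a different argument.
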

\begin{proof}
	We prove it by an induction on the phase index $k$.

	Initially, $\rho^{(0)}$ is the partial density operator
	$\sum_{s\models \Phi_0} \op{s}{s} \otimes \rho_s$
	extracted from $\rho(0)=\sum_{s\in S} \op{s}{s} \otimes \rho_s$
	by the projector $\PP_{\Phi_0}$ (see Notion~1),
	which measures the paths that satisfy $\phi$ at time $0$.

	Let $\rho^{(k-1)}$ ($k<K$) be the partial density operator
	measuring the paths that satisfy $\phi$ up to time $b_{k-1}$.
	We proceed to construct the partial density operator $\rho^{(k)}$
	measuring the paths that satisfy $\phi$ up to time $b_k$ as follows.
	Based on the semantics of the $k$th phase of $\phi$,
	all the paths that satisfy $\phi$
	can be classified into two disjoint path sets:
	\begin{itemize}
	\item one contains the paths leaving
	$\Phi_{k-1}$-states for $\Phi_k$-ones during $\I_k$,
	i.e., for some switch time-stamp $t_k \in \I_k=(a_k,b_k]$,
	\[
		\{ \omega \colon
		[\forall\,t'\in(b_{k-1},t_k)\colon \omega(t') \models \Phi_{k-1}]
			\wedge [\omega(t_k) \models \neg\Phi_{k-1}]
			\wedge [\forall\,t''\in[t_k,b_k]\colon \omega(t'') \models \Phi_k] \},
	\]
	\item the other contains the paths that do not have to leave
	$\Phi_{k-1}$-states during $\I_k$,
	i.e.
	$\{ \omega \colon [\forall\,t'\in(b_{k-1},b_k]\colon \omega(t') \models \Phi_{k-1}]
	\wedge [\omega(b_k) \models \Phi_k] \}$.
	\end{itemize}
	Here, ``leaving $\Phi_{k-1}$-states'' means that
	the path $\omega$ satisfies $\phi$ but at some $t \in \I_k$, $\omega(t)$ dissatisfies $\Phi_{k-1}$
	and it must enter into $\Phi_k$-states then.
	Correspondingly, the former path set is measured by the partial density operator
	\[
		\vl\left(\begin{aligned}
			& \P_{\Phi_k} \cdot \int_{a_k}^{b_k} \exp(\M_{\Phi_k} (b_k-t_k)) \cdot
			\P_{\neg\Phi_{k-1}} \cdot \M_{\Phi_{k-1}} \cdot
			\exp(\M_{\Phi_{k-1}} (t_k-a_k)) \ d t_k \ \cdot \\
			& \P_{\Phi_{k-1}} \cdot \exp(\M_{\Phi_{k-1}} (a_k-b_{k-1}))
		\cdot \lv(\rho^{(k-1)})
	\end{aligned}\right),
	\]
	where
	\begin{itemize}
		\item $\M_{\Phi_{k-1}}$ and $\M_{\Phi_k}$ are the governing matrices
		respectively adapted to $Q_{\Phi_{k-1}}$ and $Q_{\Phi_k}$ (see Notion~2),
		\item $\exp(\M_{\Phi_{k-1}} (a_k-b_{k-1}))$ and $\exp(\M_{\Phi_k} (b_k-t_k))$
		are the state transition functions (see Notion~3), and
		\item $\M_{\Phi_{k-1}} \cdot \exp(\M_{\Phi_{k-1}} (t_k-a_k))$ gives the rate of state transition (see Notion~4).
	\end{itemize}
	The latter path set is measured by the partial density operator
	$\vl(\P_{\Phi_k} \cdot \exp(\M_{\Phi_{k-1}} (b_k-b_{k-1})) \cdot \lv(\rho^{(k-1)}))$.
	Since the two path sets are disjoint,
	$\rho^{(k)}$ is the sum of the above two partial density operators.

	Let $\rho^{(K-1)}$ be the partial density operator
	measuring the paths that satisfy $\phi$ up to time $b_{K-1}$.
	We proceed to construct the partial density operator $\rho^{(K)}$
	measuring the paths that satisfy $\phi$ up to time $b_K$ as follows.
	Based on the semantics of the $K$th phase of $\phi$,
	all the paths that satisfy $\phi$
	are classified into two disjoint sets:
	\begin{itemize}
	\item one is not immediately reaching $\Phi_K$-states during $\I_K$,
	i.e., for some switch time-stamp $t_K \in \I_K=(a_K,b_K]$,
	\[
		\{ \omega \colon
			[\forall\,t'\in[b_{K-1},t_K)\colon \omega(t') \models \Phi_{K-1}]
			\wedge [\forall\,t''\in[a_K,t_K)\colon \omega(t'') \models \neg\Phi_K]
			\wedge [\omega(t_K) \models \Phi_K] \},
	\]
	\item the other is immediately reaching $\Phi_K$-states at $a_K$, i.e.
	$\{ \omega \colon
	[\forall\,t'\in[b_{K-1},a_K)\colon \omega(t') \models \Phi_{K-1}]
	\wedge [\omega(a_K) \models \Phi_K] \}$.
	\end{itemize}
	Here, ``not immediately reaching $\Phi_K$-states'' means that
	the satisfying path $\omega$ dissatisfies $\Phi_K$ at $a_K$,
	and it enters in $\Phi_K$-states for some $t>a_K$.
	Correspondingly, the two disjoint path sets are measured by the partial density operators:
	\[
		\begin{aligned}
		& \vl\big(\P_{\Phi_k} \cdot \exp(\M_{\Phi_{k-1} \wedge \neg\Phi_k}(b_k-a_k)) \cdot
		\P_{\Phi_{k-1}} \cdot \exp(\M_{\Phi_{k-1}} (a_k-b_{k-1})) \cdot \lv(\rho^{(k-1)})\big) \\
		& \vl(\P_{\Phi_k \wedge \Phi_{k-1}} \cdot \exp(\M_{\Phi_{k-1}} (a_k-b_{k-1}))
		\cdot \lv(\rho^{(k-1)})),
		\end{aligned}
	\]
	whose sum is exactly $\rho^{(K)}$.
	Hence $\rho^{(K)}$ is the final partial density operator
	measuring the paths that satisfy the whole path formula $\phi$.
\end{proof}

When tackling $b_K=\infty$,
we do not suffer from the trouble that
the bottom strongly connected component (BSCC) subspaces will make
the orbit of density operators periodical,
since the probability $\tr(\rho^{(K)})$ of satisfying $\phi$
is monotonically increasing along $b_K$.
So it can be achieved by taking the limit.

\begin{exa}\label{ex3}
	The percolation performance is an essential property of networks
	that describes transitions between nodes and their neighborhood.
	Reconsidering the first generation of AN in Example~\ref{ex1},
	the return probability of the \texttt{center} node reflects the percolation performance.
	We aim to compute the probability of the following return event.
	\begin{quote}
		\textit{The qutrit starts at the \texttt{center} node,
			then transits to non-center nodes in one unit of time and sojourns there up to (absolute) time at least $1$,
			and finally returns to the \texttt{center} node within time $2$.}
	\end{quote}
	This could be formally specified by the CSL formula
	$\phi_1 \equiv \Phi_0 \ntl^{\I_1}\Phi_1\ntl^{\I_2} \Phi_2$,
	where
	$\Phi_0 \equiv \Phi_2 \equiv \texttt{center}$,
	$\Phi_1 \equiv \neg\texttt{center}$,
	and time intervals $\I_1 = (0,1]$ and $\I_2=(1,2]$.
	Firstly, we take $\rho(0) = \op{3}{3}\otimes\tfrac{1}{3}\id$ and define
	\[
	\begin{aligned}
		\P_{\Phi_0} &= \P_{\Phi_2} = \PP_{\Phi_0}\otimes\PP_{\Phi_0}
		= (\op{3}{3}\otimes\id)\otimes(\op{3}{3}\otimes\id), \\
		\P_{\Phi_1} &= \P_{\neg\Phi_0} =\P_{\Phi_1 \wedge \neg\Phi_2} = \PP_{\Phi_1}\otimes\PP_{\Phi_1} \\
		&= [(\op{0}{0}+\op{1}{1}+\op{2}{2})\otimes\id]\otimes[(\op{0}{0}+\op{1}{1}+\op{2}{2})\otimes\id], \\
		\M_{\Phi_i} &= \sum_{j=1}^{12} [ (\LL_j\PP_{\Phi_i})\otimes(\LL_j\PP_{\Phi_i})^{*}
		-\tfrac{1}{2}(\LL_j\PP_{\Phi_i})^\dag(\LL_j\PP_{\Phi_i})\otimes\id
		-\tfrac{1}{2}\id\otimes (\LL_j\PP_{\Phi_i})^\T(\LL_j\PP_{\Phi_i})^* ],
	\end{aligned}
	\]
	for $i=0$ or $1$.
	Then, we calculate $\rho^{(2)}$ in an inductive fashion:
	\[
	\begin{aligned}
		\lv(\rho^{(0)}) = & \
		\P_{\Phi_0} \cdot \lv(\rho(0)), \\
		\lv(\rho^{(1)}) = & \ \P_{\Phi_1} \cdot [\exp(\M_{\Phi_{0}} (1-0))
		+ \int_{0}^{1} \exp(\M_{\Phi_1} (1-t_1)) \cdot \P_{\neg\Phi_{0}} \cdot \M_{\Phi_{0}} \ \cdot \\
		& \qquad \exp(\M_{\Phi_{0}} (t_1-0)) \ d t_1 \cdot \P_{\Phi_{0}} \cdot \exp(\M_{\Phi_{0}} (0-0))]
		\cdot \lv(\rho^{(0)}), \\
		\lv(\rho^{(2)}) = & \ [\P_{\Phi_2} \cdot \exp(\M_{\Phi_1 \wedge \neg\Phi_2}(2-1))+\P_{\Phi_2}]
		\cdot \P_{\Phi_1} \cdot \exp(\M_{\Phi_1} (1-1)) \cdot \lv(\rho^{(1)}).
	\end{aligned}
	\]
	The result is $\rho^{(2)}= \op{3}{3}\otimes\rho_3$
	with computable partial density operator $\rho_3$ on $\h$.
	We replace the definite integral with the Riemann sum under a predefined step length,
	say $\tfrac{1}{100}$,
	\[
		\frac{1}{100} \sum_{i=1}^{100}
		\exp(\M_{\Phi_1} \tfrac{100-i}{100}) \cdot \P_{\neg\Phi_0} \cdot \M_{\Phi_0}
		\cdot\exp(\M_{\Phi_0} \tfrac{i}{100}),
	\]
	and get the numerical approximation
	\[
	\begin{aligned}	
		\rho_3 \approx \ & 
		0.105266 [\op{0}{0} + \op{1}{1} + \op{2}{2}]
		+ (-0.052633 + 0.091163\imath)[\op{0}{1} + \op{1}{2} + \op{2}{0}] \ + \\
		& (-0.052633 - 0.091163\imath)[\op{0}{2} + \op{1}{0} + \op{2}{1}].
	\end{aligned}
	\]
Thus we obtain the return probability $\Pr(\phi)=\tr(\rho^{(2)})=\tr(\rho_3) \approx 0.315798$. \qed
\end{exa}

\paragraph{Complexity}
The procedure of exactly calculating the final density operator by Eq.~\eqref{eq:cal}
is based on the exact matrix exponentiation $\exp(\M \tau)$ that costs $\BigO(D^5 \log(D)^2)$.
Additionally, it requires $K-1$ induction steps,
for each of which the definite integration
\[
	\int_{a_k}^{b_k} \exp(\M_{\Phi_k} (b_k-t_k)) \cdot \P_{\neg\Phi_{k-1}} \cdot \M_{\Phi_{k-1}} \cdot
	\exp(\M_{\Phi_{k-1}} (t_k-a_k)) \ d t_k
\]
dominates the computational cost.
The definite integral can be computed by at most $N^2-1$ times of integration-by-parts,
whose integrands can be obtained in $\BigO(N^6)$
as products of a few matrices with dimension $N^2$.
Hence the exactly calculating procedure is in
$\BigO(K (D^5 \log(D)^2+N^8))=\BigO(K \cdot D^5 \log(D)^2)$
that is bounded polynomially in the encoding size $\|\QC\|$ and linearly in the size $\|\phi\|$.
However, if we adopt the Riemann sum to do the approximate definite integration,
after setting a step length $\delta$ or equivalently the number of samples $M=(b_k-a_k)/\delta$,
we could get an approximate value
\[
	\frac{1}{M}\sum_{i=1}^M \exp(\M_{\Phi_k} (M-i) \delta) \cdot \P_{\neg\Phi_{k-1}}
	\cdot \M_{\Phi_{k-1}} \cdot \exp(\M_{\Phi_{k-1}} i \delta),
\]
each term is obtained in $\BigO(N^6)$ as a product of a few matrices with dimension $N^2$.
Then the approximately calculating procedure is
in $\BigO(K \cdot (\|\epsilon\|+\log(N)+M) \cdot N^6)$.

\begin{thm}
	The CSL formula is decidable over quantum CTMCs.
\end{thm}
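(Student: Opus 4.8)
The plan is to argue by structural induction on the CSL formula, reducing every case to the decidability of a single real comparison. For state formulas, $s \models \textup{a}$ holds iff $\textup{a} \in L(s)$, and the Boolean cases are immediate; hence the satisfaction set $\{s \colon s \models \Phi\}$ is computable for every state subformula $\Phi$, which makes the projectors $\PP_\Phi$, $\P_\Phi$ and the adapted governing matrices $\M_\Phi$ explicitly available. For a model formula $\chi = \mathcal{P}_{\sim \textup{c}}(\phi)$ with $\phi = \Phi_0 \ntl^{\I_1} \Phi_1 \cdots \ntl^{\I_K} \Phi_K$, Lemma~\ref{lem:multiphase} reduces $\Pr(\{\omega \colon \omega \models \phi\})$ to $\tr(\rho^{(K)})$, where $\rho^{(K)}$ is produced by the finite symbolic recurrence~\eqref{eq:cal}. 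So it remains to decide $\tr(\rho^{(K)}) \sim \textup{c}$.

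The core of the argument is to show that $\tr(\rho^{(K)})$ is a real number of the form $\sum_{i} \alpha_i \e^{\lambda_i}$ with algebraic $\alpha_i$ and $\lambda_i$. Each $\M_\Phi$ has entries in the simple extension $\mathbb{Q}(\mu)\colon\mathbb{Q}$, so its characteristic polynomial is an $\mathbb{A}$-polynomial; by Lemma~\ref{lem:simple} we may keep all involved algebraic quantities inside one simple extension, and by Lemma~\ref{lem:closed} the eigenvalues of $\M_\Phi$ are algebraic, whence the Jordan decomposition $\M_\Phi = \mathds{T}\,\mathds{J}\,\mathds{T}^{-1}$ has algebraic entries. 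Consequently, for algebraic $\tau$ every entry of $\exp(\M_\Phi \tau)$ is a finite sum $\sum_i q_i(\tau)\,\e^{\lambda_i \tau}$ with $q_i \in \mathbb{A}[\tau]$ (the polynomial factors coming from the Jordan blocks). The integrand of the definite integral in~\eqref{eq:cal} is a product of such matrices, hence again of this form in $t_k$; integrating by parts finitely many times and evaluating at the rational endpoints $a_k,b_k$ returns a matrix whose entries stay in this class, with a possibly new constant term. Composing across the $K$ phases and taking the trace keeps us in the set of reals expressible as $\sum_i \alpha_i \e^{\lambda_i}$ after collecting like exponents. When $b_{K-1} = \infty$ the quantity $\tr(\rho^{(K)})$ is monotone in $b_K$ and bounded by $1$, so the limit exists; since $\exp(\M_\Phi t)$ is the matricization of a trace-nonincreasing completely positive map, every $\lambda_i$ has $\operatorname{Re}(\lambda_i) \le 0$, and the limit is obtained by deleting the terms with $\operatorname{Re}(\lambda_i) < 0$, again a number of the same algebraic exponential form.

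Finally I would invoke Lindemann's theorem. Writing $\tr(\rho^{(K)}) - \textup{c} = \sum_{i=1}^m \alpha_i \e^{\lambda_i}$ in canonical form (distinct algebraic $\lambda_i$, nonzero algebraic $\alpha_i$, with the constant $\textup{c}$ absorbed into the coefficient of $\e^0$), Theorem~\ref{Lindemann} implies this quantity is the empty sum --- hence exactly $0$ --- or else nonzero; in the latter case its sign is determined by evaluating rational approximations of the Euler constant $\e$ to increasing precision until the approximate value is bounded away from $0$, which terminates since the true value is a fixed nonzero real. Thus each of $\tr(\rho^{(K)}) = \textup{c}$, $< \textup{c}$, $> \textup{c}$ is decidable (modulo the standard oracle for how sharply $\e$ must be approached), so $\chi$ is decidable and the induction closes.

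The hard part will be the symbolic bookkeeping of the second paragraph: proving that the nested matrix exponentials and the parametric definite integrals in~\eqref{eq:cal} remain within the class of exponential polynomials with algebraic coefficients --- in particular, that integration by parts of Jordan-block expressions $\tau^j\,\e^{\lambda \tau}$ closes up, that evaluation at the endpoints preserves algebraicity (which is where Lemmas~\ref{lem:simple} and~\ref{lem:closed} do the real work), and that the $b_{K-1} = \infty$ case is well-behaved, for which one needs the norm-boundedness of the semigroup $\exp(\M_\Phi t)$ so that no eigenvalue has positive real part and the limit is both well-defined and still algebraic. Once this closure property is established, the comparison is simply handed to the Lindemann oracle and the rest is routine.
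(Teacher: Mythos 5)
Your proposal is correct and follows essentially the same route as the paper's own proof: reduce to $\tr(\rho^{(K)})$ via Lemma~\ref{lem:multiphase}, show by Jordan decomposition, Lemmas~\ref{lem:simple} and~\ref{lem:closed}, and closure under integration by parts that the probability is an exponential polynomial $\sum_i \alpha_i \e^{\lambda_i}$ with algebraic data, then decide the comparison with $\textup{c}$ by Theorem~\ref{Lindemann} together with the oracle for approximating $\e$. Your treatment is in fact somewhat more detailed than the paper's (the explicit structural induction, the Jordan-block bookkeeping, and the $b_K=\infty$ limit via monotonicity and non-positive real parts of the eigenvalues), but these are refinements of the same argument rather than a different approach.
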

\begin{proof}
	Using Jordan decomposition, matrix exponentiation and integration-by-parts,
	we first claim that
	the satisfaction probability can be expressed as a number in the explicit form
	\begin{equation}\label{eq:EP}
		p=\alpha_1 \e^{\lambda_1} + \alpha_2 \e^{\lambda_2} + \cdots + \alpha_m \e^{\lambda_m},
	\end{equation}
	where $\alpha_1,\ldots,\alpha_m$ are algebraic numbers
	and $\lambda_1,\ldots,\lambda_m$ are distinct algebraic numbers.
	We observe the facts:
	\begin{enumerate}
		\item The governing matrix $\M_\Phi$ appeared in Eq.~\eqref{eq:cal}
		(see its definition in Eq.~\eqref{eq:gov})
		takes algebraic numbers as entries,
		since the entries of $\HH_s$ and $\LL_{s_1,s_2}$ are algebraic. 
		\item The characteristic polynomial of $\M_\Phi$ is an $\mathbb{A}$-polynomial.
		The eigenvalues $\lambda_1,\ldots,\lambda_m$ of $\M_\Phi$ are algebraic,
		as they are roots of that $\mathbb{A}$-polynomial by Lemma~\ref{lem:closed}.
		Those eigenvalues will make up all exponents in Eq.~\eqref{eq:EP} afterwards.
		\item By Jordan decomposition,
		we have $\M_\Phi=\mathds{T} \cdot \mathds{J} \cdot \mathds{T}^{-1}$,
		where the Jordan canonical form $\mathds{J}$ has entries $0$, $1$, and those eigenvalues $\lambda_i$,
		and the transformation matrix $\mathds{T}$
		takes linear expressions with $\mathbb{A}$-coefficients over entries of $\mathds{J}$
		as its entries, which are algebraic too.
		\item Since $\exp(\M_\Phi\cdot a)=\mathds{T} \cdot \exp(\mathds{J}\cdot a) \cdot \mathds{T}^{-1}$,
		we have that the entries of $\exp(\mathds{J}\cdot a)$ are with
		the form $(\lambda_i a)^k \e^{\lambda_i a}/k!$ for some integer $k$.
		Thus the matrix exponential $\exp(\M_\Phi\cdot a)$
		takes linear expressions with $\mathbb{A}$-coefficients over $(\lambda_i a)^k \e^{\lambda_i a}/k!$
		as its entries.
		\item When $a$ is an algebraic constant that is taken from the endpoints of $\I_k$,
		the entries of $\exp(\M_\Phi\cdot a)$ are in the form~\eqref{eq:EP}.
		Whereas, $\exp(\M_\Phi\cdot t)$
		takes linear expressions with $\mathbb{A}$-coefficients over $(\lambda_i t)^k \e^{\lambda_i t}/k!$
		as its entries.
		\item The same structure holds for entries of the indefinite integral
		\[
			\int \exp(\M_{\Phi_k} (b_k-t_k)) \cdot
			\P_{\neg\Phi_{k-1}} \cdot \M_{\Phi_{k-1}} \cdot
			\exp(\M_{\Phi_{k-1}} (t_k-a_k)) \ d t_k,
		\]
		which follows integration-by-parts that for any polynomial $q(t)$,
		there is a polynomial $r(t)$ with $\deg(r) = \deg(q)$, such that
		$\int q(t) \e^{\lambda t} \ d t = r(t) \e^{\lambda t}$.
		Finally, we can see that the entries of the definite integral
		\[
			\int_{a_k}^{b_k} \exp(\M_{\Phi_k} (b_k-t_k)) \cdot
			\P_{\neg\Phi_{k-1}} \cdot \M_{\Phi_{k-1}} \cdot
			\exp(\M_{\Phi_{k-1}} (t_k-a_k)) \ d t_k
		\]
		are in the form~\eqref{eq:EP}.
	\end{enumerate}
	By Theorem~\ref{Lindemann},
	we have that $p=\textup{c}$ if and only if $m=1$, $\lambda_1=0$ and $\alpha_1=\textup{c}$.
	If not, we further compute the value $p$ up to any precision
	by sufficiently approaching the Euler constant $\e$,
	and thereby compare $p$ with $\textup{c}$,
	which decides the truth of $\QC \models \mathcal{P}_{\sim \textup{c}}(\phi)$.
\end{proof}

\paragraph{Complexity}
Although the decision algorithm is in time exponential
in the number of qubits involved in the quantum CTMC,
it is in time polynomial in the encoding size of that model
and linear in the size of the input CSL formula,
besides the query/oracle --- how sufficiently the Euler constant $\e$ should be approached.
Whereas, the satisfaction probability can be computed in polynomial time
using the numerical methods ---
the scaling and squaring method embedded with Pad\'{e} approximations and the Riemann sum.

\section{Concluding Remarks}\label{S5}
This paper introduced a novel model of quantum CTMC consisting of a classical subsystem and a quantum subsystem.
The essential difference of these two systems lies in that
a quantum system allows superpositions while a classical system does not.
The classical subsystem in the current model forbids superpositions,
making it more precise than the model of quantum CTMC proposed in~\cite{XMG+21},
which does not take this point into consideration.
Then we checked the well-known branching logic --- CSL --- against the quantum CTMC.
The decidability is established by an algebraic approach to tackling multiphase until formulas,
which lie at the core of CSL.
To be more efficient, numerical methods could be incorporated to yield a polynomial-time procedure
with respect to the dimension of the state space.
Furthermore,
we provided the Apollonian network as a running example to demonstrate the method mentioned above.

For future work, an interesting direction is
model-checking the quantum continuous-time Markov decision processes (CTMDP)~\cite{YiY18},
in which probabilistic and nondeterministic behaviors coexist.
To resolve nondeterminism,
following the verification work~\cite{Mil68,MSS20} in classical CTMDPs,
an optimal scheduler is expected to maximize the probability of a given property
over almost all time-stamps at each classical state.
Here, it follows that local optimization yields global optimization.
In quantum systems, however,
there are additionally quantum states that are described by density operators.
The known way to compare density operators $\rho$ are based on the L{\"o}wner partial order $\sqsubseteq$,
i.e., $\rho_1 \sqsubseteq \rho_2$ if $\rho_2-\rho_1$ is positive.
Thus local optimization does not yield global optimization generally,
which will bring nontrivial technical hardness to be resolved.

\section*{Acknowledgment}
\noindent The authors are grateful to the three anonymous reviewers
whose careful and insightful comments significantly improve the presentation and resolve inconsistencies.


\bibliographystyle{alphaurl}
\bibliography{check}

\appendix
\section{Implementation}\label{A1}
The presented method has been implemented in the Wolfram Language on Mathematica 12.1
with Intel Core i7-10700 CPU at 2.90GHz.
We deliver it as user-friendly functions in the file \textsf{Functions.nb}
available at \url{https://github.com/meijingyi/CheckQCTMC}.
The main functions are listed below.
\begin{itemize}
	\item \texttt{ExpSS} computes the matrix exponential, as a subroutine from MATLAB,
	using the scaling and squaring method~\cite{AMH09}.	
	\item \texttt{GovernMat} computes the governing matrix $\M_\Phi$
	which keeps the behavior of $\Phi$-states and absorbs others.
	\item \texttt{Cyl} gets the final partial density operator in Eq.~\eqref{eq:inductive}
	while computing the probability measure along a given cylinder set.
	\item \texttt{Multiphase} gets the final partial density operator in Eq.~\eqref{eq:cal}
	while computing the probability measure of a multiphase until formula.
\end{itemize}
For the function \texttt{ExpSS}, there is a useful tool MATLink
available to transfer data between Mathematica and MATLAB,
thus we can use the embedded function \texttt{expm} in MATLAB.

After specifying a quantum CTMC model $\QC$ attached with the Hilbert space
and a multiphase until formula $\phi$,
one can perform the model-checking by calling the above functions respectively.
The running examples are validated in the file \textsf{Apollonian Network.nb}.
Notably, using the scaling and squaring method for matrix exponentiation,
the time and the space consumption of computing the probabilities of the cylinder set in Example~\ref{ex2}
and the multiphase until formula in Example~\ref{ex3} are summarized in Table~\ref{tab}.

\begin{table}[htbp]
	\setlength\tabcolsep{10pt}
	\renewcommand\arraystretch{1.2}
	\centering   
	\caption{Performance using the scaling and squaring method}\label{tab}
	\begin{tabular}{ccccccc}
		\toprule
		\multirow{2}{3.5em}{method} & \multicolumn{2}{c}{\texttt{Cyl}}
		& \multicolumn{2}{c}{\texttt{Multiphase}} & \multicolumn{2}{c}{\texttt{Overall}} \\
		& time & memory & time & memory & time & memory \\
		\midrule
		\texttt{ExpSS} & 0.44s & 4.60MB & 4.80s & 36.54MB & 5.375s & 176.531MB \\
		\bottomrule
	\end{tabular}
\end{table}

\end{document}